\documentclass{article} % For LaTeX2e
\usepackage{iclr2023_conference,times}

% Optional math commands from https://github.com/goodfeli/dlbook_notation.
%%%%% NEW MATH DEFINITIONS %%%%%

\usepackage{amsmath,amsfonts,bm}

% Mark sections of captions for referring to divisions of figures

% Highlight a newly defined term

% Figure reference, lower-case.

% Figure reference, capital. For start of sentence

% Section reference, lower-case.

% Section reference, capital.

% Reference to two sections.

% Reference to three sections.

% Reference to an equation, lower-case.
\def\eqref#1{equation~\ref{#1}}
% Reference to an equation, upper case

% A raw reference to an equation---avoid using if possible

% Reference to a chapter, lower-case.

% Reference to an equation, upper case.

% Reference to a range of chapters

% Reference to an algorithm, lower-case.

% Reference to an algorithm, upper case.

% Reference to a part, lower case

% Reference to a part, upper case

\def\1{\bm{1}}

% Random variables

% rm is already a command, just don't name any random variables m

\def\rr{{\textnormal{r}}}

% Random vectors

% Elements of random vectors

% Random matrices

% Elements of random matrices

% Vectors

\def\vv{{\bm{v}}}

% Elements of vectors

% Matrix

% Tensor
\DeclareMathAlphabet{\mathsfit}{\encodingdefault}{\sfdefault}{m}{sl}
\SetMathAlphabet{\mathsfit}{bold}{\encodingdefault}{\sfdefault}{bx}{n}

% Graph

% Sets

% Don't use a set called E, because this would be the same as our symbol
% for expectation.

% Entries of a matrix

% entries of a tensor
% Same font as tensor, without \bm wrapper

% The true underlying data generating distribution

% The empirical distribution defined by the training set

% The model distribution

% Stochastic autoencoder distributions

 % Laplace distribution

% Wolfram Mathworld says $L^2$ is for function spaces and $\ell^2$ is for vectors
% But then they seem to use $L^2$ for vectors throughout the site, and so does
% wikipedia.

 % See usage in notation.tex. Chosen to match Daphne's book.

\DeclareMathOperator*{\argmin}{arg\,min}

\usepackage[hidelinks]{hyperref}
\usepackage{url}
\usepackage{booktabs} % To thicken table lines
\setcitestyle{square}
\usepackage{wrapfig}
\usepackage{graphicx}
\usepackage{array}

\title{DiffDock: Diffusion Steps, Twists, \\ and Turns for Molecular Docking}
%\title{Molecular Docking with\\Diffusion Generative Models}

% Authors must not appear in the submitted version. They should be hidden
% as long as the \iclrfinalcopy macro remains commented out below.
% Non-anonymous submissions will be rejected without review.

\author{Gabriele Corso\thanks{Equal contribution. Correspondence to \texttt{\{gcorso, hstark, bjing\}@mit.edu}.} $\,$, $\,$ Hannes St\"ark$^*$, $\,$ Bowen Jing$^*$, $\,$ Regina Barzilay$\,$ \& $\,$Tommi Jaakkola  \\
CSAIL, Massachusetts Institute of Technology \\
}

% The \author macro works with any number of authors. There are two commands
% used to separate the names and addresses of multiple authors: \And and \AND.
%
% Using \And between authors leaves it to \LaTeX{} to determine where to break
% the lines. Using \AND forces a linebreak at that point. So, if \LaTeX{}
% puts 3 of 4 authors names on the first line, and the last on the second
% line, try using \AND instead of \And before the third author name.

\iclrfinalcopy % Uncomment for camera-ready version, but NOT for submission.

%%%%
\usepackage{amsthm,lipsum,xcolor,todonotes}

\newcommand{\xx}{\mathbf{x}}
\newcommand{\ff}{\mathbf{f}}
\renewcommand{\rr}{\mathbf{r}}
\newcommand{\yy}{\mathbf{y}}
\newcommand{\bfy}{\mathbf{y}}
\newcommand{\ww}{\mathbf{w}}

\makeatletter

\newcommand*{\@rowstyle}{}

\newcommand*{\rowstyle}[1]{% sets the style of the next row
  \gdef\@rowstyle{#1}%
  \@rowstyle\ignorespaces%
}

\newcolumntype{=}{% resets the row style
  >{\gdef\@rowstyle{}}%
}

\newcolumntype{+}{% adds the current row style to the next column
  >{\@rowstyle}%
}

\makeatother
\newcommand{\new}[1]{#1}
\newcommand{\rebut}[1]{#1}
\newenvironment{rebuttal}{}{}
\DeclareMathOperator{\rmsd}{RMSD}
\DeclareMathOperator{\rmsdalign}{RMSDAlign}
\DeclareMathOperator{\uni}{Uni}
\newcommand{\PP}{\mathbb{P}}
\newcommand{\xlig}{\mathbf{x}}
\newcommand{\cc}{\mathbf{c}}
\renewcommand{\ss}{\mathbf{s}}
\newcommand{\dd}{\mathbf{d}}
\newcommand{\pp}{\mathbf{p}}
\newcommand{\bb}{\mathbf{b}}
\renewcommand{\vv}{\mathbf{v}}
\newcommand{\lnorm}{\left|\left|}
\newcommand{\rnorm}{\right|\right|}
\usepackage[ruled,vlined]{algorithm2e}

%\renewcommand{\includegraphics}[2][width=\textwidth]{\missingfigure[figwidth=\textwidth]{Placeholder for faster compilation}}
%%%%

\begin{document}

\maketitle

\begin{abstract}

%As protein-ligand interactions are at the base of the regulation and inhibition of many biological mechanisms, 
Predicting the binding structure of a small molecule ligand to a protein---a task known as \emph{molecular docking}---is critical to drug design. Recent deep learning methods that treat docking as a regression problem have decreased runtime compared to traditional search-based methods but have yet to offer substantial improvements in accuracy. We instead frame molecular docking as a \emph{generative} modeling problem and develop \textsc{DiffDock}, a diffusion generative model over the non-Euclidean manifold of ligand poses. To do so, we map this manifold to the product space of the degrees of freedom (translational, rotational, and torsional) involved in docking and develop an efficient diffusion process on this space. Empirically, \textsc{DiffDock} obtains a 38\% top-1 success rate (RMSD$<$2\AA) on PDBBind, significantly outperforming the previous state-of-the-art of traditional docking (23\%) and deep learning (20\%) methods. Moreover, while previous methods are not able to dock on computationally folded structures (maximum accuracy 10.4\%), \textsc{DiffDock} maintains significantly higher precision (21.7\%). Finally, \textsc{DiffDock} has fast inference times and provides confidence estimates with high selective accuracy. 
\end{abstract}
%RB It will be nice to add one sentence highlighting a technical insight which is currently lacking. it may help attract the right reviewers
%RB I am not sure about placing this strong negative sentence in the abstract
%Recent deep learning methods that frame docking as a regression problem have yet to offer substantial improvements in accuracy over traditional search-based methods. 
%We identify the drawbacks of recent regression-based deep learning approaches and instead view molecular docking as a \emph{generative} modeling problem. We develop \textsc{DiffDock},
%RB You already said it is generative in the previous sentence. Also, not sure about "and" -- can it be diffusion but not generative?

%corresponding to degrees of freedom involved in docking
%: the position, orientation, and torsion angles of the docking molecule.

%\end{document}
\section{Introduction}

%RN the first sentence sound really awkward
The biological functions of proteins can be modulated by small molecule ligands (such as drugs) binding to them. Thus, a crucial task in computational drug design is \emph{molecular docking}---predicting the position, orientation, and conformation of a ligand when bound to a target protein---from which the effect of the ligand (if any) might be inferred. Traditional approaches for docking \citep{trott2010autodock,halgren2004glide} rely on scoring-functions that estimate the correctness of a proposed structure or pose, and an optimization algorithm that searches for the global maximum of the scoring function. However, since the search space is vast and the landscape of the scoring functions rugged, these methods tend to be too slow and inaccurate, especially for high-throughput workflows.

Recent works \citep{equibind, Lu2022TankBind} have developed deep learning models to predict the binding pose in one shot, treating docking as a regression problem. While these methods are much faster than traditional search-based methods, they have yet to demonstrate significant improvements in accuracy. We argue that this may be because the regression-based paradigm  
%%B I don't like this conjunction "alogn with goals and metrics"
corresponds imperfectly with the objectives of molecular docking, which is reflected in the fact that standard accuracy metrics resemble the \emph{likelihood} of the data under the predictive model rather than a regression loss. 
We thus frame molecular docking as a \textit{generative modeling problem}---given a ligand and target protein structure, we learn a distribution over ligand poses.

To this end, we develop \textsc{DiffDock}, a diffusion generative model (DGM) over the space of ligand poses for molecular docking. We define a diffusion process over the degrees of freedom involved in docking: the position of the ligand relative to the protein (locating the binding pocket), its orientation in the pocket, and the torsion angles describing its conformation. \textsc{DiffDock} samples poses by running the learned (reverse) diffusion process, which iteratively transforms an uninformed, noisy prior distribution over ligand poses into the learned model distribution (Figure~\ref{fig:overview}). Intuitively, this process can be viewed as the progressive refinement of random poses via updates of their translations, rotations, and torsion angles.

While DGMs have been applied to other problems in molecular machine learning \citep{xu2021geodiff,jing2022torsional,hoogeboom2022equivariant}, existing approaches are ill-suited for molecular docking, where the space of ligand poses is an $(m+6)$-dimensional submanifold $\mathcal{M} \subset \mathbb{R}^{3n}$, where $n$ and $m$ are, respectively, the number of atoms and torsion angles. To develop \textsc{DiffDock}, we recognize that the docking degrees of freedom define $\mathcal{M}$ as the space of poses accessible via a set of allowed \emph{ligand pose transformations}. We use this idea to map elements in $\mathcal{M}$ to the product space of the groups corresponding to those transformations, where a DGM can be developed and trained efficiently.

As applications of docking models often require only a fixed number of predictions and a confidence score over these, we train a \emph{confidence model} to provide confidence estimates for the poses sampled from the DGM and to pick out the most likely sample. This two-step process can be viewed as an intermediate approach between brute-force search and one-shot prediction: we retain the ability to consider and compare multiple poses without incurring the difficulties of high-dimensional search.

Empirically, on the standard blind docking benchmark PDBBind, \textsc{DiffDock} achieves 38\% of top-1 predictions with ligand root mean square distance (RMSD) below 2\AA{}, nearly doubling the performance of the previous state-of-the-art deep learning model (20\%). \textsc{DiffDock} significantly outperforms even state-of-the-art search-based methods (23\%), 
%RB You may want to quantify how much faster as you quantify accuracy
while still being 3 to 12 times faster on GPU. Moreover, it provides an accurate confidence score of its predictions, obtaining 8\new{3}\% RMSD$<$2\AA{} on its most confident third of the previously unseen complexes.

We further evaluate the methods on structures generated by ESMFold \citep{Lin2022ESM2}. Our results confirm previous analyses \citep{wong2022benchmarking} that showed that existing methods are not capable of docking against these approximate apo-structures (RMSD$<$2\AA{} equal or below 10\%). Instead, without further training, \textsc{DiffDock} places 22\% of its top-1 predictions within 2\AA{} opening the way for the revolution brought by accurate protein folding methods in the modeling of protein-ligand interactions.

To summarize, the main contributions of this work are:
\begin{enumerate}
    \item We frame the molecular docking task as a generative problem and highlight the issues with previous deep learning approaches.
    \item We formulate a novel diffusion process over ligand poses corresponding to the degrees of freedom involved in molecular docking.
    \item We achieve a new state-of-the-art 38\% top-1 prediction with RMSD$<$2\AA{} on PDBBind blind docking benchmark, considerably surpassing the previous best search-based (23\%) and deep learning methods (20\%).
    \item Using ESMFold to generate approximate protein apo-structures, we show that our method places its top-1 prediction with RMSD$<$2\AA{} on 28\% of the complexes, nearly tripling the accuracy of the most accurate baseline.
\end{enumerate}

\begin{figure}[t]
    \centering
    \includegraphics[width=\textwidth]{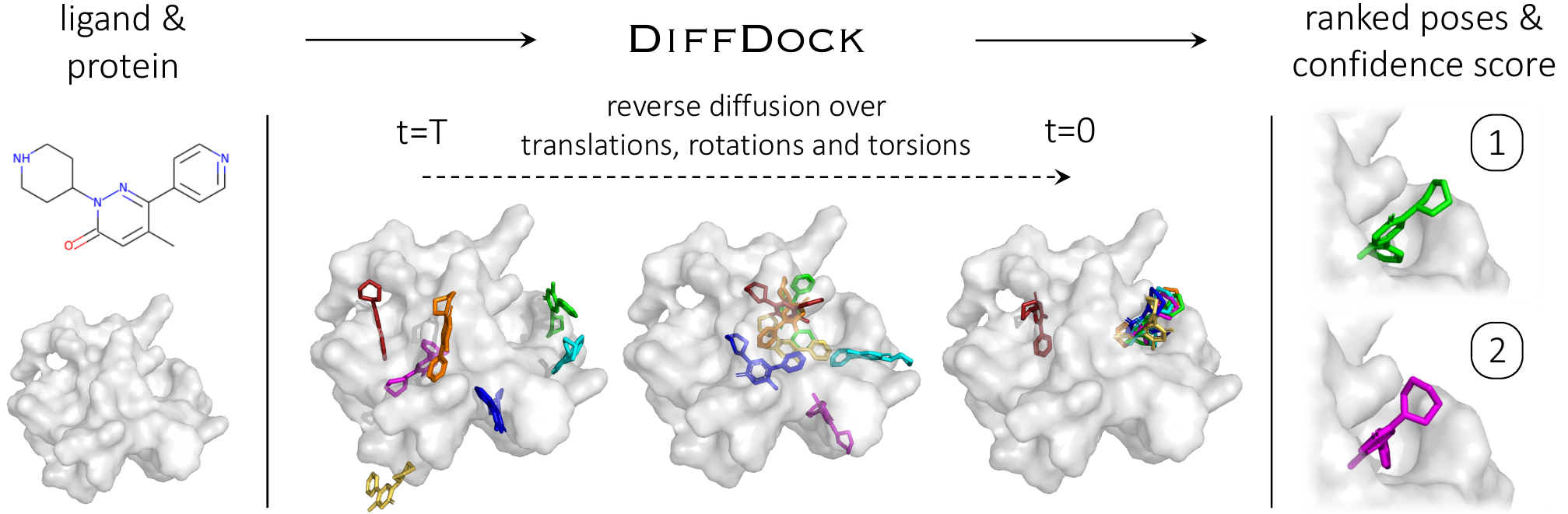}
    \caption{Overview of \textsc{DiffDock}. \emph{Left}: The model takes as input the separate ligand and protein structures. \emph{Center}: Randomly sampled initial poses are denoised via a reverse diffusion over translational, rotational, and torsional degrees of freedom. \emph{Right:}. The sampled poses are ranked by the confidence model to produce a final prediction and confidence score.}
    \label{fig:overview}
\end{figure}
%RB I don't like this figure

\section{Background and Related Work} \label{sec:background}

%RB general comment: You need to add more references as the current citations are sparse. 
\textbf{Molecular docking.} The molecular docking task is usually divided between known-pocket and blind docking. Known-pocket docking algorithms receive as input the position on the protein where the molecule will bind (the \emph{binding pocket}) and only have to find the correct orientation and conformation. Blind docking instead does not assume any prior knowledge about the binding pocket; in this work, we will focus on this general setting. Docking methods typically assume the knowledge of the protein holo-structure (bound), this assumption is in many real-world applications unrealistic, therefore, we evaluate methods both with holo and computationally generated apo-structures (unbound). Methods are normally evaluated by the percentage of hits, or approximately correct predictions, commonly considered to be those where the ligand RMSD error is below 2\AA{} \citep{Alhossary2015QuickVina2, Hassan2017QVinaW, mcnutt2021gnina}.

\textbf{Search-based docking methods.} Traditional docking methods \citep{trott2010autodock,halgren2004glide,thomsen2006moldock} consist of a parameterized physics-based scoring function and a search algorithm. The scoring-function takes in 3D structures and returns an estimate of the quality/likelihood of the given pose, while the search stochastically modifies the ligand pose (position, orientation, and torsion angles) with the goal of finding the scoring function's global optimum. Recently, machine learning has been applied to parameterize the scoring-function \citep{mcnutt2021gnina, mendez2021deepdock}. However, these search-based methods remain computationally expensive to run, are often inaccurate when faced with the vast search space characterizing blind docking \citep{equibind}, and significantly suffer when presented with apo-structures \citep{wong2022benchmarking}. 

%RB I am not sure about this title since the above methods also use ML for learning scoring function
\textbf{Machine learning for blind docking.} Recently, EquiBind \citep{equibind} has tried to tackle the blind docking task by directly predicting pocket keypoints on both ligand and protein and aligning them. TANKBind \citep{Lu2022TankBind} improved over this by independently predicting a docking pose (in the form of an interatomic distance matrix) for each possible pocket and then ranking 
%RB why do you call them one-shot?
them. Although these one-shot or few-shot regression-based prediction methods are orders of magnitude faster, their performance has not yet reached that of traditional search-based methods.

\textbf{Diffusion generative models.} Let the data distribution be the initial distribution $p_0(\xx)$ of a continuous diffusion process described by $d\xx = \ff(\xx, t) \, dt + g(t) \, d\ww$, where $\ww$ is the Wiener process. \emph{Diffusion generative models} (DGMs)\footnote{Also known as diffusion probabilistic models (DPMs), denoising diffusion probabilistic models (DDPMs), diffusion models, score-based models (SBMs), or score-based generative models (SGMs).} model the \emph{score}\footnote{Not to be confused with the \emph{scoring function} of traditional docking methods.} $\nabla_\mathbf{x} \log p_t(\mathbf{x})$ of the diffusing data distribution in order to generate data via the reverse diffusion $d\mathbf{x} = [\ff(\xx, t) - g(t)^2\nabla_\mathbf{x} \log p_t(\mathbf{x}) ] + g(t) \, d\ww$ \citep{song2021score}. In this work, we always take $\ff(\xx, t)=0$. Several DGMs have been developed for molecular ML tasks, including molecule generation \citep{hoogeboom2022equivariant}, conformer generation \citep{xu2021geodiff}, and protein design \citep{trippe2022diffusion}. However, these approaches learn distributions over the full Euclidean space $\mathbb{R}^{3n}$ with 3 coordinates per atom, making them ill-suited for molecular docking where the degrees of freedom are much more restricted \new{(see Appendix \ref{appx:torsional})}.

\section{Docking as Generative Modeling}\label{sec:generative_modeling}

Although EquiBind and other ML methods have provided strong runtime improvements by avoiding an expensive search process, their performance has not reached that of search-based methods. As our analysis below argues, this may be caused by the models' uncertainty and the optimization of an objective that does not correspond to how molecular docking is used and evaluated in practice. 

\textbf{Molecular docking objective.} Molecular docking plays a critical role in drug discovery because the prediction of the 3D structure of a bound protein-ligand complex enables further computational and human expert analyses on the strength and properties of the binding interaction. Therefore, a docked prediction is only useful if its deviation from the true structure does not significantly affect the output of such analyses. Concretely, a prediction is considered acceptable when the distance between the structures (measured in terms of ligand RMSD) is below some small tolerance on the order of the length scale of atomic interactions (a few \AA{}ngstr\"om). Consequently, the standard evaluation metric used in the field has been the percentage of predictions with a ligand RMSD (to the crystal ligand pose) below some value $\epsilon$. 

However, the objective of maximizing the proportion of predictions with RMSD within some tolerance $\epsilon$ is not differentiable and cannot be used for training with stochastic gradient descent. Instead, maximizing the expected proportion of predictions with RMSD $<\epsilon$ corresponds to maximizing the likelihood of the true structure under the model's output distribution, in the limit as $\epsilon$ goes to 0. This observation motivates training a generative model to minimize an upper bound on the negative log-likelihood of the observed structures under the model's distribution. Thus, we view molecular docking as the problem of learning a distribution over ligand poses conditioned on the protein structure and develop a diffusion generative model over this space (Section~\ref{sec:diffusion_model}).

\begin{figure}[t]
    \centering
    \includegraphics[width=\textwidth]{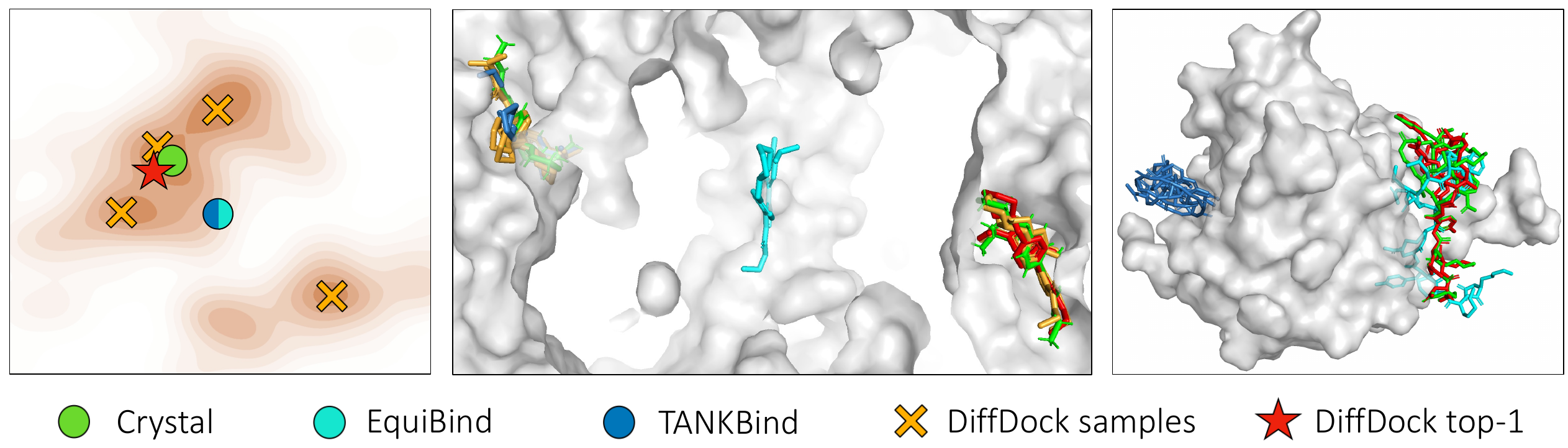}
    \caption{``\textsc{DiffDock} top-1" refers to the sample with the highest confidence. ``\textsc{DiffDock} samples" to the other diffusion model samples. \textit{Left:} Visual diagram of the advantage of generative models over regression models. Given uncertainty in the correct pose (represented by the orange distribution), regression models tend to predict the mean of the distribution, which may lie in a region of low density. \textit{Center:} when there is a global symmetry in the protein (aleatoric uncertainty), EquiBind places the molecule in the center while \textsc{DiffDock} is able to sample all the true poses. \textit{Right:} even in the absence of strong aleatoric uncertainty, the epistemic uncertainty causes EquiBind's prediction to have steric clashes and TANKBind's to have many self-intersections. }
    \label{fig:averaging}
\end{figure}

\textbf{Confidence model.} With a trained diffusion model, it is possible to sample an arbitrary number of ligand poses from the posterior distribution according to the model. However, researchers are often interested in seeing only one or a small number of predicted poses and an associated confidence measure\footnote{For example, the pLDDT confidence score of AlphaFold2 \citep{jumper2021highly} has had a very significant impact in many applications \citep{necci2021critical,bennett2022improving}.} for downstream analysis. Thus, we train a confidence model over the poses sampled by the diffusion model and rank them based on its confidence that they are within the error tolerance. The top-ranked ligand pose and the associated confidence are then taken as \textsc{DiffDock}'s top-1 prediction and confidence score.

\textbf{Problem with regression-based methods.} The difficulty with the development of deep learning models for molecular docking lies in the \new{aleatoric (which is the data inherent uncertainty, e.g., the ligand might bind with multiple poses to the protein) and epistemic uncertainty (which arises from the complexity of the task compared with the limited model capacity and data available) on the pose}. Therefore, given the available co-variate information (only protein structure and ligand identity), any method will exhibit uncertainty about the correct binding pose among many viable alternatives. Any regression-style method that is forced to select a single configuration that minimizes the expected square error would learn to predict the (weighted) mean of such alternatives. In contrast, a generative model with the same co-variate information would instead aim to capture the distribution over the alternatives, populating all/most of the significant modes even if similarly unable to distinguish the correct target. This behavior, illustrated in Figure~\ref{fig:averaging}, causes the regression-based models to produce significantly more physically implausible poses than our method. In particular, we observe frequent steric clashes (e.g., 26\% of EquiBind's predictions) and self-intersections in EquiBind's and TANKBind's predictions (Figures \ref{fig:self_intersections} and \ref{fig:random_examples}). We found no intersections in \textsc{DiffDock}'s predictions. 
Visualizations and quantitative evidence of these phenomena are in Appendix~\ref{appx:steric_clashes}.

\section{Method} \label{sec:diffusion_model}

\subsection{Overview}

A ligand pose is an assignment of atomic positions in $\mathbb{R}^{3}$, so in principle, we can regard a pose $\xx$ as an element in $\mathbb{R}^{3n}$, where $n$ is the number of atoms. However, this encompasses far more degrees of freedom than are relevant in molecular docking. In particular, bond lengths, angles, and small rings in the ligand are essentially rigid, such that the ligand flexibility lies almost entirely in the torsion angles at rotatable bonds \new{(see Appendix \ref{appx:torsional} for further discussion)}. Traditional docking methods, as well as most ML ones, take as input a seed conformation $\cc \in \mathbb{R}^{3n}$ of the ligand in isolation and change only the relative position and the torsion degrees of freedom in the final bound conformation.\footnote{RDKit ETKDG is a popular method for predicting the seed conformation. Although the structures may not be predicted perfectly, the errors lie largely in the torsion angles, which are resampled anyways.} The space of ligand poses consistent with $\cc$ is, therefore, an $(m+6)$-dimensional submanifold $\mathcal{M}_\cc \subset \mathbb{R}^{3n}$, where $m$ is the number of rotatable bonds, and the six additional degrees of freedom come from rototranslations relative to the fixed protein. We follow this paradigm of taking as input a seed conformation $\cc$, and formulate molecular docking as learning a probability distribution $p_\cc(\xx \mid \yy)$ over the manifold $\mathcal{M}_\cc$, conditioned on a protein structure $\yy$.

%RB why not to put the following sentences in the related work where you are already talking about i
DGMs on submanifolds have been formulated by \citet{de2022riemannian} in terms of projecting a diffusion in ambient space onto the submanifold. However, the kernel $p(\xx_t \mid \xx_0)$ of such a diffusion is not available in closed form and must be sampled numerically with a geodesic random walk, making training very inefficient. We instead define a one-to-one mapping to another, ``nicer'' manifold where the diffusion kernel can be sampled directly and develop a DGM in that manifold. To start, we restate the discussion in the last paragraph as follows:
%RB this formatting really looks weird
\begin{center}
\textit{Any ligand pose consistent with a seed conformation can be reached by a combination of\\ (1) ligand translations, (2) ligand rotations, and (3) changes to torsion angles.}
\end{center}
%RB This reads repetitive with the paragraph above
This can be viewed as an informal definition of the manifold $\mathcal{M}_\cc$. Simultaneously, it suggests that given a continuous family of ligand pose transformations corresponding to the $m+6$ degrees of freedom, a distribution on $\mathcal{M}_\cc$ can be lifted to a distribution on the product space of the corresponding groups---which is itself a manifold. We will then show how to sample the diffusion kernel on this product space and train a DGM over it.

\subsection{Ligand Pose Transformations}

We associate translations of ligand position with the 3D translation group $\mathbb{T}(3)$, rigid rotations of the ligand with the 3D rotation group $SO(3)$, and changes in torsion angles at each rotatable bond with a copy of the 2D rotation group $SO(2)$. More formally, we define operations of each of these groups on a ligand pose $\cc \in \mathbb{R}^{3n}$. The translation $A_\text{tr}: \mathbb{T}(3) \times \mathbb{R}^{3n} \rightarrow \mathbb{R}^{3n}$ is defined straightforwardly as $A_\text{tr}(\mathbf{r}, \xlig)_i = \xlig_i + \mathbf{r}$ using the isomorphism $\mathbb{T}(3) \cong \mathbb{R}^3$ where $\xx_i \in \mathbb{R}^3$ is the position of the $i$th atom. Similarly, the rotation $A_\text{rot}: SO(3) \times \mathbb{R}^{3n} \rightarrow \mathbb{R}^{3n}$ is defined by $A_\text{rot}(R, \xlig)_i = R(\xlig_i - \bar \xlig) + \bar\xlig$ where $\bar \xlig = \frac{1}{n}\sum \xlig_i$, corresponding to rotations around the (unweighted) center of mass of the ligand.

Many valid definitions of a change in torsion angles are possible, as the torsion angle around any bond $(a_i, b_i)$ can be updated by rotating the $a_i$ side, the $b_i$ side, or both. However, we can specify changes of torsion angles to be \emph{disentangled} from rotations or translations. %One way of doing so is to identify a central motif in the molecule, such as a ring, and change torsion angles in a way that keeps the motif fixed. However, this special treatment of the central motif introduces an arbitrary asymmetry into the problem and could be difficult for a score model to reason about. 
To this end, we define the operation of elements of $SO(2)^m$ such that it causes a minimal perturbation (in an RMSD sense) to the structure:\footnote{Since we do not define or use the composition of elements of $SO(2)^m$, strictly speaking, it is a product \emph{space} but not a \emph{group} and can be alternatively thought of as the torus $\mathbb{T}^m$ with an origin element.}
\newtheorem*{definition*}{Definition}
\begin{definition*}
Let $B_{k,\theta_k}(\xx) \in \mathbb{R}^{3n}$ be any valid torsion update by $\theta_k$ around the $k$th rotatable bond $(a_k, b_k)$. We define $A_\text{tor}: SO(2)^m \times \mathbb{R}^{3n} \rightarrow \mathbb{R}^{3n}$such that
$$A_\text{tor}(\boldsymbol{\theta}, \xx) = \rmsdalign(\xx, (B_{1,\theta_1} \circ \cdots B_{m,\theta_m})(\xx))$$
where $\boldsymbol{\theta} = (\theta_1, \ldots \theta_m)$ and
\begin{equation}
    \rmsdalign(\xx, \xx') = \argmin_{\xx^\dagger \in \{g\xx' \mid g\in SE(3)\}}\rmsd(\xx, \xx^\dagger)
\end{equation}
\end{definition*}
This means that we apply all the $m$ torsion updates in any order and then perform a global RMSD alignment with the unmodified pose. The definition is motivated by ensuring that the infinitesimal effect of a torsion is orthogonal to any rototranslation, i.e., it induces no \emph{linear or angular momentum}. These properties can be stated more formally as follows (proof in Appendix~\ref{app:proofs}):

\newtheorem{proposition}{Proposition}
\begin{proposition}
Let $\rebut{\bfy}(t) := A_\text{tor}(t\boldsymbol{\theta}, \xlig)$ for some $\boldsymbol{\theta}$ and where $t\boldsymbol{\theta} = (t\theta_1, \ldots t\theta_m)$. Then the linear and angular momentum are zero: $\frac{d}{dt} \rebut{\bar\bfy}|_{t=0} = 0$ and $\sum_i (\xx - \bar\xlig) \times \frac{d}{dt}\rebut{\bfy}_i|_{t=0} = 0$ where $\bar\xx = \frac{1}{n}\sum_i \rebut{\xx_i}$.
\end{proposition}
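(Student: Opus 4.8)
The strategy is to read off the two momentum identities from the \emph{first-order stationarity conditions} of the RMSD alignment: the whole point of $\rmsdalign$ is to make the residual displacement orthogonal to every infinitesimal rigid motion, and differentiating that orthogonality relation at $t=0$ produces exactly the vanishing of linear and angular momentum. To set up notation, write $\tilde\xx(t) := (B_{1,t\theta_1}\circ\cdots\circ B_{m,t\theta_m})(\xx)$ for the un-aligned torsioned pose and let $g_t\in SE(3)$ be the optimal aligning transformation, so that $\xx(t)=A_{\text{tor}}(t\boldsymbol\theta,\xx)=g_t\,\tilde\xx(t)$. Since each $B_{k,0}$ is the identity, $\tilde\xx(0)=\xx$, hence $g_0$ is the identity and $\xx(0)=\xx$. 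Put $\vd_i(t):=\xx_i(t)-\xx_i$ for the per-atom displacement, so $\vd_i(0)=0$, and $\vv_i:=\frac{d}{dt}\xx_i(t)\big|_{t=0}=\frac{d}{dt}\vd_i(t)\big|_{t=0}$.

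Next I would write down the stationarity conditions. By definition $\xx(t)=g_t\tilde\xx(t)$ minimizes $\sum_i \|\xx_i - g\,\tilde\xx_i(t)\|^2$ over $g\in SE(3)$. Perturbing $g$ about the minimizer moves the aligned points $\xx_i(t)$ along the infinitesimal rigid motions $\xx_i(t)\mapsto\boldsymbol\omega\times\xx_i(t)+\vu$ for arbitrary $\boldsymbol\omega,\vu\in\mathbb{R}^3$, so setting the directional derivative of the objective to zero gives, for all $t$ near $0$,
\begin{equation*}
\sum_i \vd_i(t)\cdot\big(\boldsymbol\omega\times\xx_i(t)+\vu\big)=0 \qquad \forall\,\boldsymbol\omega,\vu .
\end{equation*}
Choosing $\boldsymbol\omega=0$ with $\vu$ arbitrary yields $\sum_i \vd_i(t)=0$; choosing $\vu=0$ and applying the triple-product identity $\vd_i\cdot(\boldsymbol\omega\times\xx_i)=\boldsymbol\omega\cdot(\xx_i\times\vd_i)$ yields $\sum_i \xx_i(t)\times\vd_i(t)=0$.

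Finally I would differentiate these two identities in $t$ at $t=0$. The first gives $\sum_i\vv_i=0$, i.e. the center of mass is stationary, which is the linear-momentum condition $\frac{d}{dt}\bar\xx\big|_{t=0}=0$. Differentiating the second, $\frac{d}{dt}\sum_i\xx_i(t)\times\vd_i(t)=\sum_i\big(\dot\xx_i(t)\times\vd_i(t)+\xx_i(t)\times\dot\vd_i(t)\big)$, and evaluating at $t=0$ using $\vd_i(0)=0$ and $\xx_i(0)=\xx_i$, the first term drops and only $\sum_i \xx_i\times\vv_i=0$ survives. Subtracting $\bar\xx\times\sum_i\vv_i=0$ then gives $\sum_i(\xx_i-\bar\xx)\times\vv_i=0$, the angular-momentum condition, as required.

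The main obstacle is purely technical: justifying that $g_t$ is well-defined, unique, and differentiable near $t=0$, so that the stationarity relation may be treated as an identity in $t$ and differentiated termwise. This holds because for small $t$ the perturbed structure $\tilde\xx(t)$ stays close to the non-degenerate reference $\xx$, where the Procrustes/Kabsch solution for the optimal rotation is unique and depends smoothly on its input; away from this regime the argument would need care, but near $t=0$ everything reduces to elementary multivariable calculus.
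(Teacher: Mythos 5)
Your proof is correct, and it reaches the conclusion by a genuinely different route than the paper. The paper first passes to the limit $t\to 0$, turning the RMSD minimization into a least-squares problem over the infinitesimal rigid-motion parameters $(\boldsymbol{\omega},\mathbf{v})$, then \emph{explicitly solves} that problem --- obtaining $\mathbf{v}=-\frac{1}{n}\sum_i \bb_i$ and $\boldsymbol{\omega}=-\mathcal{I}(\rr)^{-1}\sum_i \rr_i\times\bb_i$ via the inertia tensor --- and finally substitutes these back to verify that the linear and angular momenta vanish. You instead never solve for the optimal alignment at all: you write down the first-order stationarity (Procrustes normal) equations at each finite $t$, namely that the residual $\vd(t)$ is orthogonal to the tangent space of the $SE(3)$ orbit, observe that these are exactly the statements $\sum_i\vd_i(t)=0$ and $\sum_i \xx_i(t)\times\vd_i(t)=0$, and then differentiate these identities at $t=0$, where $\vd_i(0)=0$ kills the extra term. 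Your version makes transparent something the paper's computation somewhat obscures: the optimality conditions of the alignment \emph{are} the momentum conditions, so no inertia tensor or explicit minimizer is needed; it also shows the result is independent of the particular form of $B(t\boldsymbol{\theta},\cdot)$, a point the paper only notes in a remark after its proof. What the paper's route buys in exchange is the closed-form expressions for the compensating translational and angular velocities, which are informative in their own right. The one technical point your argument leans on --- existence, uniqueness, and differentiability of the optimal $g_t$ near $t=0$ --- you correctly flag and dispatch via non-degeneracy (non-collinearity) of the conformer, the same assumption the paper invokes in its proof of Proposition 2.
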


Now consider the product space\footnote{Since we never compose elements of $\PP$, we do not need to define a group structure.} $\PP = \mathbb{T}^3 \times SO(3) \times SO(2)^m$ and define $A: \PP \times \mathbb{R}^{3n} \rightarrow \mathbb{R}^{3n}$ as
\begin{equation}\label{eq:action}
    A((\rr, R, \boldsymbol{\theta}), \xx) = A_\text{tr}(\rr, A_\text{rot}(R, A_\text{tor}(\boldsymbol{\theta}, \xx)))
\end{equation}
These definitions collectively provide the sought-after product space corresponding to the docking degrees of freedom. Indeed, for a seed ligand conformation $\cc$, we can formally define the space of ligand poses $\mathcal{M}_\cc = \{A(g, \cc) \mid g \in \PP\}$. This corresponds precisely to the intuitive notion of the space of ligand poses that can be reached by rigid-body motion plus torsion angle flexibility.

\subsection{Diffusion on the Product Space}

We now proceed to show how the product space can be used to learn a DGM over ligand poses in $\mathcal{M}_\cc$. First, we need a theoretical result (proof in Appendix~\ref{app:proofs}):
\begin{proposition}
    For a given seed conformation $\cc$, the map $A(\cdot, \cc): \PP \rightarrow \mathcal{M}_\cc$ is a bijection.
\end{proposition}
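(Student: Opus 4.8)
The plan is to notice first that surjectivity is free: by definition $\mathcal{M}_\cc = \{A(g,\cc)\mid g\in\PP\}$ is precisely the image of $A(\cdot,\cc)$, so nothing needs to be shown there. All the content is in injectivity, and I would prove it constructively—by showing that from a pose $\xx = A((\rr,R,\boldsymbol{\theta}),\cc)$ one can uniquely read off, in order, the torsion component $\boldsymbol{\theta}$, then the translation $\rr$, and finally the rotation $R$.

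First I would recover $\boldsymbol{\theta}$. The key observation is that translations and rotations are rigid motions, and so is the $\rmsdalign$ appearing inside $A_\text{tor}$, since it is a minimization over $SE(3)$ and hence always returns a rigid transform. Rigid motions preserve all internal coordinates of the ligand, in particular the torsion angles at the rotatable bonds. Therefore the torsion angle of $\xx$ at the $k$th bond equals that of $(B_{1,\theta_1}\circ\cdots\circ B_{m,\theta_m})(\cc)$, which by the definition of a valid torsion update is the (fixed, known) torsion angle of $\cc$ at bond $k$ shifted by $\theta_k$. Since $\theta_k$ ranges over $SO(2)\cong \mathbb{R}/2\pi\mathbb{Z}$, this shift determines each $\theta_k$ uniquely, so $\boldsymbol{\theta}$ is recovered.

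With $\boldsymbol{\theta}$ fixed I would set $\xx_0 := A_\text{tor}(\boldsymbol{\theta},\cc)$, a single fixed conformation, so that $\xx = A_\text{tr}(\rr, A_\text{rot}(R,\xx_0))$, and it only remains to show $(\rr,R)\mapsto A_\text{tr}(\rr,A_\text{rot}(R,\xx_0))$ is injective. A direct computation shows that $A_\text{rot}$, rotating about the center of mass, fixes the center of mass, so the center of mass of $\xx$ is $\bar{\xx}_0 + \rr$; comparing centers of mass of two equal poses (which share the same $\xx_0$) recovers $\rr$. Finally, with $\rr$ and $\boldsymbol{\theta}$ fixed, equality of the poses forces $R(\xx_{0,i}-\bar{\xx}_0) = R'(\xx_{0,i}-\bar{\xx}_0)$ for every atom $i$, i.e. $R$ and $R'$ agree on all centered atomic positions; provided these span $\mathbb{R}^3$—equivalently, the ligand atoms are not all collinear, which holds for any genuine molecule with at least three non-collinear atoms—this forces $R = R'$.

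The hard part is the first step, justifying that $\boldsymbol{\theta}$ is read off cleanly; everything afterward is a routine faithfulness argument for the rototranslation action. The care required is in two places: (i) confirming that $\rmsdalign$ is genuinely torsion-preserving, which follows from its being valued in $SE(3)$; and (ii) confirming that composing the individual updates $B_{k,\theta_k}$ shifts the torsion at bond $k$ by exactly $\theta_k$ while leaving the torsions at the other rotatable bonds unchanged, which is where the tree structure of the rotatable bonds (no rotatable bond lying on a cycle with another) is implicitly used. The only genuine hypothesis beyond these is the mild non-collinearity assumption needed for $SO(3)$ to act faithfully on $\xx_0$.
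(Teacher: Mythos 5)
Your proposal is correct and follows essentially the same route as the paper's proof: surjectivity by definition, then injectivity by reading off the translation from the center of mass, the torsion angles from the rigid-motion invariance of dihedral angles (noting that $\rmsdalign$ is valued in $SE(3)$), and the rotation from faithfulness of the $SO(3)$ action on a non-collinear point set. The only difference is cosmetic---you recover the components in the order torsion, translation, rotation, whereas the paper does translation, torsion, rotation, and both invoke the same non-collinearity caveat for the final step.
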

which means that the inverse $A^{-1}_\cc: \mathcal{M}_\cc \rightarrow \PP$ given by $A(g, \cc) \mapsto g$ maps ligand poses $\xx \in \mathcal{M}_\cc$ to points on the product space $\PP$. We are now ready to develop a diffusion process on $\PP$.

\citet{de2022riemannian} established that the DGM framework transfers straightforwardly to Riemannian manifolds with the score and score model as elements of the tangent space and with the geodesic random walk as the reverse SDE solver. Further, the score model can be trained in the standard manner with denoising score matching \citep{song2019generative}. Thus, to implement a diffusion model on $\PP$, it suffices to develop a method for sampling from and computing the score of the diffusion kernel on $\PP$. Furthermore, since $\PP$ is a product manifold, the forward diffusion proceeds independently in each manifold \citep{rodola2019functional}, and the tangent space  is a direct sum: $T_g \PP = T_\rr \mathbb{T}_3 \oplus T_R SO(3) \oplus T_{\boldsymbol{\theta}} SO(2)^m \cong \mathbb{R}^3 \oplus \mathbb{R}^3 \oplus \mathbb{R}^m$ where $g = (\rr, R, \boldsymbol{\theta})$. Thus, it suffices to sample from the diffusion kernel and regress against its score in each group independently.

In all three groups, we define the forward SDE as $d\xx = \sqrt{d\sigma^2(t)/dt}\, d\ww$ where $\sigma^2 = \sigma^2_\text{tr}$, $\sigma^2_\text{rot}$, or $\sigma^2_\text{tor}$ for $\mathbb{T}(3)$, $SO(3)$, and $SO(2)^m$ respectively and where $\ww$ is the corresponding Brownian motion. Since $\mathbb{T}(3) \cong \mathbb{R}^3$, the translational case is trivial and involves sampling and computing the score of a standard Gaussian with variance $\sigma^2(t)$. The diffusion kernel on $SO(3)$ is given by the $IGSO(3)$ distribution \citep{nikolayev1970normal, leach2022denoising}, which can be sampled in the axis-angle parameterization by sampling a unit vector $\boldsymbol{\hat\omega} \in \mathfrak{so}(3)$ uniformly\footnote{$\mathfrak{so}(3)$ is the tangent space of $SO(3)$ at the identity and is the space of Euler (or rotation) vectors, which are equivalent to the axis-angle parameterization.} and random angle $\omega \in [0, \pi]$ according to
\begin{equation}
\begin{footnotesize}
    p(\omega) = \frac{1-\cos \omega}{\pi} f(\omega) \quad \text{where} \quad f(\omega) = \sum_{l=0}^\infty (2l+1)\exp(-l(l+1)\sigma^2/2)\frac{\sin((l+1/2)\omega)}{\sin(\omega/2)}
\end{footnotesize}
\end{equation}
Further, the score of the diffusion kernel is $\nabla \ln p_t(R' \mid R) = (\frac{d}{d\omega}\log f(\omega))\boldsymbol{\hat}{\omega} \in T_{R'}SO(3)$, where $R' = \mathbf{R}(\omega \boldsymbol{\hat}{\omega})R$ is the result of applying Euler vector $\omega\boldsymbol{\hat\omega}$ to $R$ \citep{yim2023se}. The score computation and sampling can be accomplished efficiently by precomputing the truncated infinite series and interpolating the CDF of $p(\omega)$, respectively. Finally, the $SO(2)^m$ group is diffeomorphic to the torus $\mathbb{T}^m$, on which the diffusion kernel is a \emph{wrapped normal distribution} with variance $\sigma^2(t)$. This can be sampled directly, and the score can be precomputed as a truncated infinite series \citep{jing2022torsional}.

\subsection{Training and Inference} \label{sec:train_inf}

\textbf{Diffusion model.} Although we have defined the diffusion kernel and score matching objectives on $\PP$, we nevertheless develop the training and inference procedures to operate on ligand poses in 3D coordinates directly. Providing the full 3D structure, rather than abstract elements of the product space, to the score model allows it to reason about physical interactions using $SE(3)$ equivariant models, not be dependent on arbitrary definitions of torsion angles \citep{jing2022torsional}, and better generalize to unseen complexes. In Appendix~\ref{app:train_inf}, we present the training and inference procedures and discuss how we resolve their dependence on the choice of seed conformation $\cc$ used to define the mapping between $\mathcal{M}_\cc$ and the product space.

\textbf{Confidence model.} In order to collect training data for the confidence model $\dd(\xx, \yy)$, we run the trained diffusion model to obtain a set of candidate poses for every training example and generate labels by testing whether or not each pose has RMSD below 2\AA{}. The confidence model is then trained with cross-entropy loss to correctly predict the binary label for each pose. During inference, the diffusion model is run to generate $N$ poses in parallel, which are passed to the confidence model that ranks them based on its confidence that they have RMSD below 2\AA{}.

\subsection{Model Architecture} \label{sec:architecture}

We construct the score model $\ss(\xx, \yy, t)$ and the confidence model $\dd(\xx, \yy)$ to take as input the current ligand pose $\xx$ and protein structure $\yy$ in 3D space. The output of the confidence model is a single scalar that is $SE(3)$-invariant (with respect to joint rototranslations of $\xx, \yy$) as ligand pose distributions are defined relative to the protein structure, which can have arbitrary location and orientation. On the other hand, the output of the score model must be in the tangent space $T_\rr \mathbb{T}_3 \oplus T_R SO(3) \oplus T_{\boldsymbol{\theta}} SO(2)^m$. The space $T_\rr \mathbb{T}_3 \cong \mathbb{R}^3$ corresponds to translation vectors and $T_R SO(3) \cong \mathbb{R}^3$ to rotation (Euler) vectors, both of which are $SE(3)$-equivariant. Finally, $T_{\boldsymbol{\theta}} SO(2)^m$ corresponds to scores on $SE(3)$-invariant quantities (torsion angles). Thus, the score model must predict two $SE(3)$-equivariant vectors for the ligand as a whole and an $SE(3)$-invariant scalar at each of the $m$ freely rotatable bonds.

The score model and confidence model have similar architectures based on $SE(3)$-equivariant convolutional networks over point clouds \citep{thomas2018tensor, e3nn}. However, the score model operates on a coarse-grained representation of the protein with $\alpha$-carbon atoms, while the confidence model has access to the all-atom structure. This multiscale setup yields improved performance and a significant speed-up w.r.t. doing the whole process at the atomic scale. The architectural components are summarized below and detailed in Appendix~\ref{app:architecture}.

Structures are represented as heterogeneous geometric graphs formed by ligand atoms, protein residues, and (for the confidence model) protein atoms. Residue nodes receive as initial features language model embeddings trained on protein sequences \citep{Lin2022ESM2}. Nodes are sparsely connected based on distance cutoffs that depend on the types of nodes being linked and on the diffusion time. The ligand atom representations after the final interaction layer are then used to produce the different outputs. To produce the two $\mathbb{R}^3$ vectors representing the translational and rotational scores, we convolve the node representations with a tensor product filter placed at the center of mass. For the torsional score, we use a pseudotorque convolution to obtain a scalar at each rotatable bond of the ligand analogously to \citet{jing2022torsional}, with the distinction that, since the score model operates on coarse-grained representations, the output is not a pseudoscalar (its parity is neither odd nor even). For the confidence model, the single scalar output is produced by mean-pooling the ligand atoms' scalar representations followed by a fully connected layer. 

%Intuitively, nodes are connected with the range of elements that they might be closely interacting with; this range may span widely at the start of the diffusion but is narrow at the end. 

\section{Experiments} \label{sec:experiments}

\begin{table}[t]
    \caption{\textbf{PDBBind blind docking.} All methods receive a small molecule and are tasked to find its binding location, orientation, and conformation. Shown is the percentage of predictions with RMSD $<$ 2\AA{} and the median RMSD (see Appendix \ref{appx:hyperparameters}). The top half contains methods that directly find the pose; the bottom half those that use a pocket prediction method. The last two lines show our method's performance. In parenthesis,     we specify the number of poses sampled from the generative model. * indicates that the method runs exclusively on CPU, ``-" means not applicable; some cells are empty due to infrastructure constraints. For TANKBind, the runtimes for the top-1 and top-5 predictions are different. No method received further training or tuning on ESMFold structures. Further evaluation details are in Appendix~\ref{appx:baseline_details} and further metrics are reported in Appendix~\ref{appx:additional_results}.}
    \label{tab:results_main}
     \begin{small}
     \begin{center}

    \begin{tabular}{lcc|cc|cc|cc|c}
    \toprule
      & \multicolumn{4}{c}{Holo crystal proteins} & \multicolumn{4}{c}{Apo ESMFold proteins} & \\ \rule{0pt}{2ex}
      
      & \multicolumn{2}{c}{Top-1 RMSD} & \multicolumn{2}{c}{Top-5 RMSD} & \multicolumn{2}{c}{Top-1 RMSD} & \multicolumn{2}{c}{Top-5 RMSD} & Average\\ \rule{0pt}{2ex}  
    
        Method & \%$<$2 & Med. & \%$<$2 & Med. & \%$<$2 & Med. & \%$<$2 & Med. &  Runtime (s)  \\
    \midrule
    \textsc{GNINA}              & 22.9 & 7.7  & 32.9 & 4.5  & 2.0 & 22.3  & 4.0 & 14.22 &  127  \\
    \textsc{SMINA}              & 18.7 & 7.1  & 29.3 & 4.6  & 3.4 & 15.4  & 6.9 & 10.0 &  126*   \\
    \textsc{GLIDE}              & 21.8 & 9.3  &      &      & & & & & 1405* \\
    \textsc{EquiBind}           & 5.5  & 6.2  &  -   &  -   & 1.7 & 7.1  & - & - &  0.04  \\ \midrule
    \textsc{TANKBind}           & 20.4 & 4.0  & 24.5 & 3.4  & 10.4 & 5.4 & 14.7 & 4.3 & 0.7/2.5  \\
    \textsc{P2Rank+SMINA}       & 20.4 & 6.9  & 33.2 & 4.4  & 4.6 & 10.0  & 10.3 & 7.0 & 126* \\ 
    \textsc{P2Rank+GNINA}       & 28.8 & 5.5  & 38.3 & 3.4  & 8.6 & 11.2  & 12.8 &  7.2 & 127  \\ 
    \textsc{EquiBind+SMINA}     & 23.2 & 6.5  & 38.6 & 3.4  & 4.3 & 8.3  & 11.7 & 5.8 & 126*  \\
    \textsc{EquiBind+GNINA}     & 28.8 & 4.9  & 39.1 & 3.1  & 10.2 & 8.8  & 18.6 & 5.6 & 127   \\ \midrule
    %\textbf{\textsc{DiffDock} (fast)}  & \textbf{34.7\tiny{$\pm$}} & \textbf{ }  & \textbf{40.2} & \textbf{ }  &  10  &   157  \\
    \textbf{\textsc{DiffDock} (10)}  & 35.0 & 3.6  & 40.7 & 2.65  & \textbf{21.7} & \textbf{5.0}  & \textbf{31.9} & \textbf{3.3} &  10     \\
    \textbf{\textsc{DiffDock} (40)}  & \textbf{38.2} & \textbf{3.3}  & \textbf{44.7} & \textbf{2.40}  & 20.3 & 5.1  & 31.3 & \textbf{3.3} & 40      \\
    \bottomrule
    \end{tabular}
    \end{center}
    \end{small}
\vspace{-15pt}
\end{table}

\textbf{Experimental setup.} We evaluate our method on the complexes from PDBBind \citep{liu2017PDBBind}, a large collection of protein-ligand structures collected from PDB \citep{berman2003PDB}, which was used with time-based splits to benchmark many previous works  \citep{equibind, Volkov2022PDBBindSplits, Lu2022TankBind}. We compare \textsc{DiffDock} with state-of-the-art search-based methods SMINA \citep{koes2013smina}, QuickVina-W \citep{Hassan2017QVinaW}, GLIDE \citep{halgren2004glide}, and GNINA \citep{mcnutt2021gnina} \new{as well as the older Autodock Vina \citep{trott2010autodock}}, and the recent deep learning methods EquiBind and TANKBind presented above.  Extensive details about the experimental setup, data, baselines, and implementation are in Appendix~\ref{appx:baseline_details} and all code is available at \url{https://github.com/gcorso/DiffDock}. The repository also contains videos of the reverse diffusion process (images of the same are in Figure~\ref{fig:reverse_diffusion2}). 

As we are evaluating blind docking, the methods receive two inputs: the ligand with a predicted seed conformation (e.g., from RDKit) and the crystal structure of the protein. Since search-based methods work best when given a starting binding pocket to restrict the search space, we also test the combination of using an ML-based method, such as P2Rank \citep{krivak2018p2rank} (also used by TANKBind) or EquiBind to find an initial binding pocket, followed by a search-based method to predict the exact pose in the pocket. To evaluate the generated poses, we compute the heavy-atom RMSD (permutation symmetry corrected) between the predicted and the ground-truth ligand when the protein structures are aligned. Since all methods except for EquiBind generate multiple ranked structures, we report the metrics for the highest-ranked prediction as the top-1 prediction and top-5 refers to selecting the most accurate pose out of the 5 highest ranked predictions; a useful metric when multiple predictions are used for downstream tasks.

\textbf{Apo-structure docking setup.} In order to test the performance of the models against computationally generated apo-structures, we run ESMFold \citep{Lin2022ESM2} on the proteins of the test-set of PDBBind and align its predictions with the crystal holo-structure. In order to align structure changing as little as possible to the conformation of the binding site, we use in the alignment of the residues a weight exponential with the distance to the ligand. More details on this alignment are provided in Appendix~\ref{appx:apodocking}. The docking methods are then run against the generated structures and compared with the ligand pose inferred by the alignment.

\textbf{Docking accuracy.}  \textsc{DiffDock} significantly outperforms all previous methods (Table~\ref{tab:results_main}). In particular, \textsc{DiffDock} obtains an impressive 3\new{8.2}\% top-1 success rate (i.e., percentage of predictions with RMSD $<$2\AA{}\footnote{Most commonly used evaluation metric \citep{Alhossary2015QuickVina2, Hassan2017QVinaW, mcnutt2021gnina}}) when sampling 40 poses and 3\new{5.0}\% when sampling just 10. This performance vastly surpasses that of state-of-the-art commercial software such as \textsc{GLIDE} (21.8\%, \new{$p{=}2.7{\times} 10^{-7}$}) and the previous state-of-the-art deep learning method TANKBind (20.4\%, \new{$p{=}1.0{\times} 10^{-12}$}). The use of ML-based pocket prediction in combination with search-based docking methods improves over the baseline performances, but even the best of these (\new{EquiBind+GNINA})  reaches a success rate of only 28.8\% \new{($p{=}0.0003$)}. Unlike regression methods like EquiBind, \textsc{DiffDock} is able to provide multiple diverse predictions of different likely poses, as highlighted in the top-5 performances. 

\textbf{Apo-structure docking accuracy.} Previous work \cite{wong2022benchmarking}, that highlighted that existing docking methods are not suited to dock to (computational) apo-structures, are validated by the results in Table \ref{tab:results_main} where previous methods obtain top-1 accuracies of only 10\% or below. This is most likely due to their reliance on trying to find key-lock matches, which makes them inflexible to imperfect protein structures. Even when built-in options allowing side-chain flexibility are activated, the results do not improve (see Appendix~\ref{tab:sidechain_flex_baselines}). Meanwhile, \textsc{DiffDock} is able to retain a larger proportion of its accuracy, placing the top-ranked ligand below 2\AA{} away on 22\% of the complexes. This ability to generalize to imperfect structures, even without retraining, can be attributed to a combination of (1) the robustness of the diffusion model to small perturbations in the backbone atoms, and (2) the fact that \textsc{DiffDock} does not use the exact position of side chains in the score model and is therefore forced to implicitly model their flexibility.

\begin{figure}[t]
\begin{center}
\includegraphics[width=.43\textwidth]{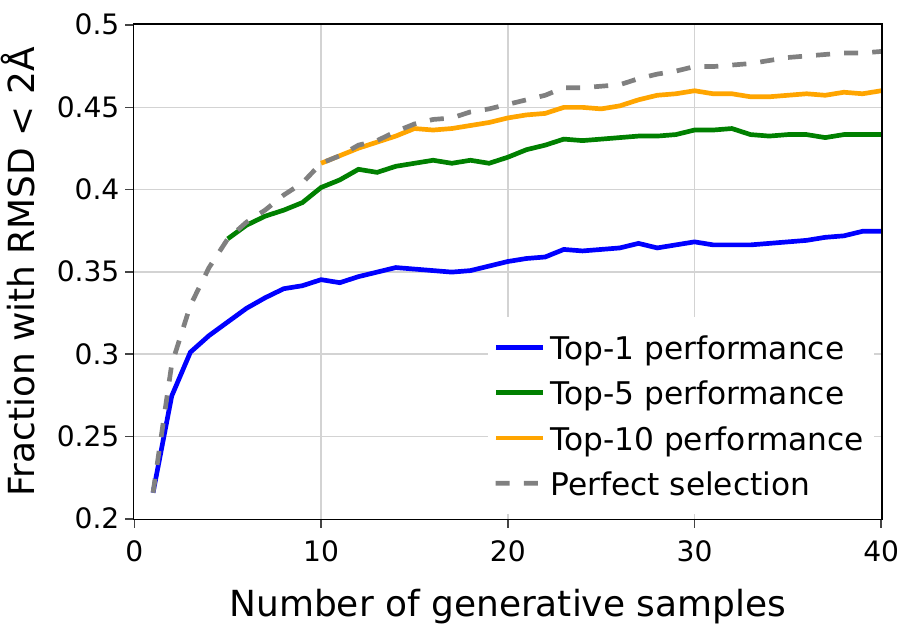}
\hspace{15pt}
\includegraphics[width=.43\textwidth]{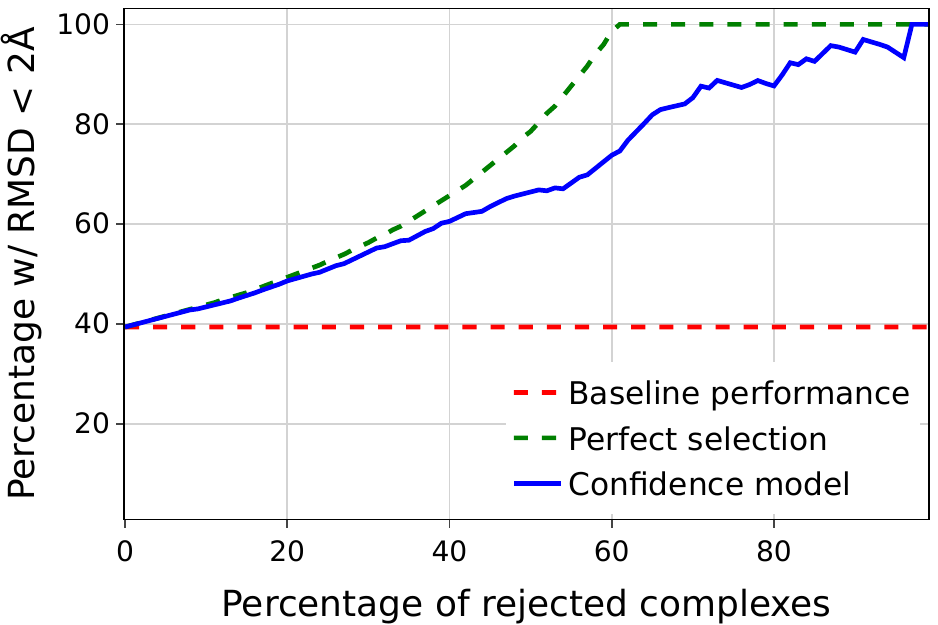}
\vspace{-6pt}
\caption{\textbf{Left:}  \textsc{DiffDock}'s performance as a function of the number of samples from the generative model. ``Perfect selection" refers to choosing the sample with the lowest RMSD. \textbf{Right:} Selective accuracy: percentage of predictions with RMSD below 2\AA{} when only making predictions for the portion of the dataset where \textsc{DiffDock} is most confident.} 
\label{fig:results_main}
\end{center}
\vspace{-15pt}
\end{figure}

%Figure~\ref{fig:results_main}-\textit{left} shows the proportion of RMSDs below an arbitrary threshold $\epsilon$ with \textsc{DiffDock} exceeding previous methods for almost every possible $\epsilon$.\footnote{With the exception of very small $\epsilon<$1\AA{} where \textsc{GLIDE} performs better.} Figure~\ref{fig:results_main}-\textit{right} plots how the model's performance changes with the number of generative samples. 

%RB I couldn't find the main table with runtime. Where is it located?
\textbf{Inference runtime.} \textsc{DiffDock} holds its superior accuracy while being (on GPU) 3 to 12 times faster than the best search-based method, GNINA (Table~\ref{tab:results_main}). This high speed is critical for applications such as high throughput virtual screening for drug candidates or reverse screening for protein targets, where one often searches over a vast number of complexes. As a diffusion model, \textsc{DiffDock} is inevitably slower than the one-shot deep learning method \textsc{EquiBind}, but as shown in Figure~\ref{fig:results_main}-\textit{left} and Appendix~\ref{app:ablations}, it can be significantly sped up without significant loss of accuracy.

\textbf{Selective accuracy of confidence score.} As the top-1 results show, \textsc{DiffDock}'s confidence model is very accurate in ranking the sampled poses for a given complex and picking the best one. We also investigate the \textit{selective accuracy} of the confidence model across \emph{different} complexes by evaluating how \textsc{DiffDock}'s accuracy increases if it only makes predictions when the confidence is above a certain threshold, known as \textit{selective prediction}. In Figure~\ref{fig:results_main}-\textit{right}, we plot the success rate as we decrease the percentage of complexes for which we make predictions, i.e., increase the confidence threshold. When only making predictions for the top one-third of complexes in terms of model confidence, the success rate improves from 38\% to 8\new{3}\%. Additionally, there is a Spearman correlation of 0.\new{68} between \textsc{DiffDock}'s confidence and the negative RMSD. Thus, the confidence score is a good indicator of the quality of \textsc{DiffDock}'s top-ranked sampled pose and provides a highly valuable confidence measure for downstream applications.

\section{Conclusion} \label{sec:conclusion}

We presented \textsc{DiffDock}, a diffusion generative model tailored to the task of molecular docking. This represents a paradigm shift from previous deep learning approaches, which use regression-based frameworks, to a generative modeling approach that is better aligned with the objective of molecular docking. To produce a fast and accurate generative model, we designed a diffusion process over the manifold describing the main degrees of freedom of the task via ligand pose transformations spanning the manifold. Empirically, \textsc{DiffDock} outperforms the state-of-the-art by large margins on PDBBind, has fast inference times, and provides confidence estimates with high selective accuracy. Moreover, unlike previous methods, it retains a large proportion of its accuracy when run on computationally folded protein structures. Thus, \textsc{DiffDock} can offer great value for many existing real-world pipelines and opens up new avenues of research on how to best integrate downstream tasks, such as affinity prediction, into the framework and apply similar ideas to protein-protein and protein-nucleic acid docking.

\section*{Acknowledgments}

We pay tribute to Octavian-Eugen Ganea (1987-2022), dear colleague, mentor, and friend without whom this work would have never been possible. 

We thank Rachel Wu, Jeremy Wohlwend, Felix Faltings, Jason Yim, Victor Quach, Saro Passaro, Patrick Walters, Michael Heinzinger, Mario Geiger, Michael John Arcidiacono, Noah Getz, and John Yang for valuable feedback and insightful discussions. We thank Wei Lu for his help with running TANKBind. We thank Valentin De Bortoli, Emile Mathieu, Brian Trippe, and Jason Yim for the critical discussions and help to formalize diffusion on $SO(3)$. Noah Getz contributed to the implementation of the apo-structures alignment procedure. 

This work was supported by the Machine Learning for Pharmaceutical Discovery and Synthesis (MLPDS) consortium, the Abdul Latif Jameel Clinic for Machine Learning in Health, the DTRA Discovery of Medical Countermeasures Against New and Emerging (DOMANE) threats program, the DARPA Accelerated Molecular Discovery program and the Sanofi Computational Antibody Design grant. Bowen Jing acknowledges the support from the Department of Energy Computational Science Graduate Fellowship.

\bibliography{references}
\bibliographystyle{iclr2023_conference}
\newpage
\appendix

\section{Proofs} \label{app:proofs}

\subsection{Proof of Proposition 1}

\newtheorem{prop}{Proposition}
\begin{prop}
Let $\rebut{\bfy}(t) := A_\text{tor}(t\boldsymbol{\theta}, \xlig)$ for some $\boldsymbol{\theta}$ and where $t\boldsymbol{\theta} = (t\theta_1, \ldots t\theta_m)$. Then the linear and angular momentum are zero: $\frac{d}{dt} \rebut{\bar\bfy}|_{t=0} = 0$ and $\sum_i (\xx - \bar\xlig) \times \frac{d}{dt}\rebut{\bfy}_i|_{t=0} = 0$ where $\bar\xx = \frac{1}{n}\sum_i \rebut{\xx_i}$.
\end{prop}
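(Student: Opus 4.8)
The plan is to unwrap the definition of $A_\text{tor}$ and then exploit the \emph{first-order optimality} of the RMSD alignment. Write $\yy(t) = (B_{1,t\theta_1}\circ\cdots\circ B_{m,t\theta_m})(\xlig)$ for the pose obtained by applying the torsion updates only, so that $\xlig(t) = \rmsdalign(\xlig,\yy(t)) = g(t)\,\yy(t)$ where $g(t) = \argmin_{g\in SE(3)}\sum_i \lnorm \xlig_i - (g\yy(t))_i\rnorm^2$. Since each $B_{k,0}$ is the identity we have $\yy(0)=\xlig$, and the optimal alignment of $\xlig$ to itself is $g(0)=\mathrm{id}$, so $\xlig(0)=\xlig$. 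Writing $\vv_i := \frac{d}{dt}\xlig_i|_{t=0}$, the two claims are exactly $\sum_i \vv_i = 0$ and $\sum_i (\xlig_i - \bar\xlig)\times\vv_i = 0$.

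Next I would record the stationarity conditions that $g(t)$ satisfies for every $t$ near $0$, obtained by perturbing $g(t)$ along the translational and rotational generators of $SE(3)$. Perturbing by an infinitesimal translation $\va$ sends each $\xlig_i(t)$ to $\xlig_i(t)+s\va$; setting the $s$-derivative of the squared error to zero for all $\va$ yields $\sum_i(\xlig_i(t)-\xlig_i)=\vzero$, i.e. $\sum_i\xlig_i(t)=\sum_i\xlig_i$. Perturbing by an infinitesimal rotation with axis $\boldsymbol{\omega}$ sends $\xlig_i(t)$ to $\xlig_i(t)+s\,\boldsymbol{\omega}\times\xlig_i(t)$; using the identity $\va\cdot(\boldsymbol{\omega}\times\vc)=\boldsymbol{\omega}\cdot(\vc\times\va)$ with $\boldsymbol{\omega}$ arbitrary gives $\sum_i \xlig_i(t)\times\xlig_i = \vzero$ (the self-term $\xlig_i(t)\times\xlig_i(t)$ drops out). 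Both identities hold for all $t$ in a neighborhood of $0$ precisely because $g(t)$ minimizes the alignment error at each $t$.

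I would then differentiate both identities at $t=0$. From $\sum_i \xlig_i(t)=\sum_i\xlig_i$ I get $\sum_i\vv_i=\vzero$, i.e. $\frac{d}{dt}\bar\xlig|_{t=0}=\frac1n\sum_i\vv_i=\vzero$, which is zero linear momentum. From $\sum_i\xlig_i(t)\times\xlig_i=\vzero$ (the reference $\xlig_i$ being constant in $t$) I get $\sum_i\vv_i\times\xlig_i=\vzero$, equivalently $\sum_i\xlig_i\times\vv_i=\vzero$. Since $\xlig_i(0)=\xlig_i$, the angular momentum at $t=0$ is $\sum_i(\xlig_i-\bar\xlig)\times\vv_i=\sum_i\xlig_i\times\vv_i-\bar\xlig\times(\sum_i\vv_i)=\vzero-\bar\xlig\times\vzero=\vzero$, establishing both claims.

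The one genuine subtlety — and the main obstacle — is justifying that $g(t)$, and hence $\xlig(t)$, is differentiable at $t=0$ so that the identities may be differentiated. This holds because the optimal superposition of $\yy(t)$ onto the fixed non-degenerate reference $\xlig$ is unique and depends smoothly on $\yy(t)$ near the perfect-overlap configuration $\yy(0)=\xlig$ (e.g.\ via the implicit function theorem applied to the stationarity equations, whose Hessian is nondegenerate for a configuration spanning $\mathbb{R}^3$), while $\yy(t)$ is smooth in $t$ since each torsion update $B_{k,\theta}$ is. Conceptually, the computation shows that $\vv$ is the component of the raw torsion velocity $\frac{d}{dt}\yy|_{t=0}$ orthogonal (in $\mathbb{R}^{3n}$) to the tangent space of the $SE(3)$-orbit of $\xlig$; vanishing linear and angular momentum is exactly the statement that $\vv$ is orthogonal to the translational and rotational generators, which is what the $\rmsdalign$ step enforces.
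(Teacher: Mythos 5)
Your proof is correct, and it reaches the conclusion by a genuinely different route than the paper, even though both arguments ultimately rest on the first-order optimality of the RMSD alignment. The paper passes to the $t\to 0$ limit first, turning the alignment into a least-squares problem over infinitesimal rototranslations $(\vv,\boldsymbol{\omega})$; it then solves that problem explicitly, obtaining $\vv=-\frac{1}{n}\sum_i\bb_i$ and $\boldsymbol{\omega}=-\mathcal{I}(\rr)^{-1}\sum_i\rr_i\times\bb_i$ in terms of the inertia tensor, and verifies by substitution that the linear and angular momenta cancel. You instead write down the stationarity (normal) equations of the finite-$t$ alignment problem --- $\sum_i\xlig_i(t)=\sum_i\xlig_i$ and $\sum_i\xlig_i(t)\times\xlig_i=\vzero$, valid identically in $t$ --- and differentiate them at $t=0$. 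Your derivation of these two identities is sound (translations and rotations about the origin generate $SE(3)$, so these are the complete first-order conditions), and the final reduction $\sum_i(\xlig_i-\bar\xlig)\times\vv_i=\sum_i\xlig_i\times\vv_i-\bar\xlig\times\sum_i\vv_i$ is handled correctly. What your approach buys is economy: you never need to solve for the compensating velocity or invert the inertia tensor, and you isolate (and actually justify, via the implicit function theorem and nondegeneracy of the configuration) the differentiability of the aligned trajectory, a point the paper's proof glosses over when it interchanges the limit with the minimization. What the paper's approach buys is the explicit formula for the optimal infinitesimal rototranslation, which makes concrete how the alignment subtracts off exactly the raw linear and angular momentum of the torsion velocity $\bb_i$; it also makes visible the remark, noted after the paper's proof, that nothing depends on the particular form of $B(t\boldsymbol{\theta},\cdot)$ --- a generality your argument shares, since you likewise use only smoothness of $\yy(t)$ and $\yy(0)=\xlig$.
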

\begin{proof}
    Let $\rebut{\bfy}(t) = R(t)(B(t, \boldsymbol{\theta}, \xx)-\bar\xx) + \bar\xx + \pp(t)$ where $B(t, \boldsymbol{\theta}, \cdot)$ = $B_{1,t\theta_1}\circ \cdots B_{m, t\theta_m}$ and $R(t), \pp(t)$ are the rotation (around $\bar\xx$) and translation associated with the optimal RMSD alignment between $B(t, \boldsymbol{\theta}, \xx)$ and $\xx$. By definition of $\rmsd$, \rebut{for any $t$, $R(t)$ and $\pp(t)$ minimize}
    \begin{rebuttal}
    \begin{equation}
         \lnorm\rebut{\bfy}(t)-\xx\rnorm = \lnorm R(t)(B(t, \boldsymbol{\theta}, \xx)-\bar\xx) + \bar\xx + \pp(t) - \xx\rnorm
    \end{equation}
    For infinitesimal $t = dt$, the RHS becomes
    \begin{equation}
    \begin{aligned}
        \text{RHS} &= \lnorm R(dt)(B(dt, \boldsymbol{\theta}, \xx)-\bar\xx) + \bar\xx + \pp(dt) - \xx\rnorm \\
        &= \lnorm \left(R'(0) \, dt + R(0)\right)\left(B'(0, \boldsymbol{\theta}, \xx)\, dt + B(0, \boldsymbol{\theta}, \xx)-\bar\xx\right) + \bar\xx + \pp'(0)\, dt + \pp(0) - \xx\rnorm\\
        &= \lnorm R'(0) \left(\xx -\bar\xx \right)\, +B'(0, \boldsymbol{\theta}, \xx)\, + \pp'(0)\, \rnorm \, dt
    \end{aligned}
    \end{equation}
    where we have used $R(0) = I$, $B(0, \boldsymbol{\theta}, \xx) = \xx$, and $\pp(0) = 0$. Thus, we see that RMSD alignment implies that the derivatives of $R(t), \pp(t)$ minimize the norm of
    \end{rebuttal}%
    \begin{equation}
        \bfy'(0) = R'(0) \left(\xx -\bar\xx \right)\, +B'(0, \boldsymbol{\theta}, \xx)\, + \pp'(0)
    \end{equation}
    This expression represents the instantaneous velocity of the points $\bfy_i$ at $t=0$. \rebut{We now show that minimizing the velocity results in zero linear and angular momentum.}
    
    We abbreviate $B'(t, \boldsymbol{\theta}, \xx(0))_i := \bb_i$ and $\pp' = \vv$. Further, let $\rr_i = \xx_i - \bar\xx$, \rebut{such that the rotational contribution to the velocity can be written in terms of an angular velocity vector $\boldsymbol{\omega}$. With this, at $t=0$ we have} 
    \begin{rebuttal}
    \begin{equation}
        \bfy'_i = \bb_i + \boldsymbol{\omega} \times \rr_i + \vv
    \end{equation}
    We thus obtain the squared norm as
    \end{rebuttal}
    \begin{equation}
    \begin{footnotesize}
        \begin{aligned}
            \sum_i \lnorm \bfy'_i \rnorm^2 &= \sum_i (\bb_i + \boldsymbol{\omega} \times \rr_i + \vv)\cdot(\bb_i + \boldsymbol{\omega} \times \rr_i+ \vv) \\
            &= \sum_i \left[\lnorm \bb_i \rnorm^2 + 2\bb_i\cdot(\boldsymbol{\omega} \times \rr_i) + 2\bb_i\cdot\vv + (\boldsymbol{\omega} \times \rr_i) \cdot (\boldsymbol{\omega} \times \rr_i) + 2 (\boldsymbol{\omega} \times \rr_i) \cdot \vv + \lnorm \vv \rnorm^2\right]\\
            &= \sum_i \lnorm \bb_i \rnorm^2 + 2\boldsymbol{\omega} \cdot \sum_i (\rr_i \times \bb_i) + 2\left(\sum_i \bb_i\right)\cdot\vv +  n\lnorm \vv \rnorm^2 + \boldsymbol{\omega}^T\mathcal{I}(\rr)\boldsymbol{\omega}
        \end{aligned}
    \end{footnotesize}
    \end{equation}
    where we have used the fact that $\sum_i \rr_i = 0$ and where $\mathcal{I}(\rr) = \left(\sum_i \rr_i \cdot \rr_i\right)I - \sum_i \rr_i \rr_i^T$ is the $3\times 3$ \emph{inertia tensor}. \rebut{To minimize the squared norm (and thus the norm itself), we set gradients with respect to $\vv, \boldsymbol{\omega}$ to zero. This gives}
    \begin{equation}
        \vv = -\frac{1}{n}\sum_i \bb_i \quad \text{and} \quad \boldsymbol{\omega} = -\mathcal{I}(\rr)^{-1}\left(\sum_i \rr_i \times \bb_i\right)
    \end{equation}
    Now with $\bfy'_i = \bb_i + \boldsymbol{\omega} \times \rr_i + \vv$ we evaluate the linear momentum
    \begin{align}
        \frac{1}{n}\sum_i \bfy_i' = \frac{1}{n}\left(\sum_i \bb_i + \boldsymbol{\omega}\times\sum_i\rr_i + n\vv\right) = 0
    \end{align}
    which is zero by direct substitution of $\vv$. Similarly, we evaluate the angular momentum
    \begin{equation}
    \begin{aligned}
        \sum_i \rr_i \times \yy'_i &= \sum_i \rr_i \times \bb_i + \sum_i \rr_i \times (\boldsymbol{\omega} \times \rr_i) + \sum_i \rr_i \times \vv \\
        &= \sum_i \rr_i \times \bb_i + \mathcal{I}(\rr)\boldsymbol{\omega} = 0
    \end{aligned}
    \end{equation}
    which is zero by direct substitution of $\boldsymbol{\omega}$. Thus, the linear and angular momentum are zero at $t=0$ for arbitrary $\xx$. 
\end{proof}

Note that since we did not use the particular form of $B(t\boldsymbol{\theta}, \xx)$ in the above proof, we have shown that RMSD alignment can be used to disentangle rotations and translations from the infinitesimal action of any arbitrary function.

\subsection{Proof of Proposition 2}

\begin{prop}
    For a given seed conformation $\cc$, the map $A(\cdot, \cc): \PP \rightarrow \mathcal{M}_\cc$ is a bijection.
\end{prop}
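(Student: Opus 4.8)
The map $A(\cdot,\cc)$ is surjective by construction, since $\mathcal{M}_\cc$ is \emph{defined} to be its image $\{A(g,\cc) \mid g \in \PP\}$. The entire content of the statement is therefore injectivity, and the cleanest route is to exhibit an explicit inverse: given a pose $\xx = A((\rr,R,\boldsymbol{\theta}),\cc)$, I would recover $\boldsymbol{\theta}$, then $\rr$, and finally $R$, in that order, showing that each is forced uniquely by $\xx$ and the fixed seed $\cc$.

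First I would recover $\boldsymbol{\theta}$. The key observation is that the torsion angle around each rotatable bond is invariant under rigid motions. Writing $\tau_k(\xx)$ for the torsion around the $k$th bond, note that $A_\text{tor}$ composes the updates $B_{1,\theta_1}\circ\cdots\circ B_{m,\theta_m}$ with an $\rmsdalign$ (an element of $SE(3)$), while $A_\text{rot}$ and $A_\text{tr}$ are themselves rigid motions; none of these alter any $\tau_k$ beyond what the $B$'s contribute. Since each $B_{k,\theta_k}$ is a valid torsion update, advancing $\tau_k$ by exactly $\theta_k$ and leaving the remaining torsions fixed, we get $\tau_k(\xx) = \tau_k(\cc) + \theta_k \pmod{2\pi}$. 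Hence $\theta_k = \tau_k(\xx) - \tau_k(\cc)$ is determined as an element of $SO(2)$, and $\boldsymbol{\theta}$ is recovered.

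With $\boldsymbol{\theta}$ in hand, the intermediate pose $\xx' := A_\text{tor}(\boldsymbol{\theta},\cc)$ is a fixed, known structure, and $\xx = A_\text{tr}(\rr, A_\text{rot}(R, \xx'))$. Because $A_\text{rot}(R,\cdot)$ rotates about the centroid it preserves the centroid, while $A_\text{tr}(\rr,\cdot)$ shifts it by $\rr$; comparing centers of mass then gives $\rr = \bar\xx - \bar{\xx'}$, so $\rr$ is recovered. Finally, unwinding the definitions yields $R(\xx'_i - \bar{\xx'}) = \xx_i - \bar\xx$ for every atom $i$. Provided the centered seed positions $\{\xx'_i - \bar{\xx'}\}$ affinely span $\mathbb{R}^3$, this system pins down $R \in SO(3)$ uniquely, completing the inverse and hence injectivity.

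The main obstacle is the first step: arguing rigorously that the output torsions are exactly the seed torsions shifted by $\boldsymbol{\theta}$, independently of $R$, $\rr$, and the $\rmsdalign$. This rests on (i) the $SE(3)$-invariance of torsion angles, and (ii) the fact that valid torsion updates act independently on distinct bonds, so their composition is additive on angles regardless of application order, which holds because on the acyclic rotatable-bond structure rotating one bond rigidly transports any other bond without changing its dihedral. A secondary caveat is the non-degeneracy needed to fix $R$: were the ligand atoms collinear, rotation about the molecular axis would be free, so the argument (and the bijectivity claim itself) implicitly assumes the conformation spans more than a line, which holds for any genuine ligand with $m \geq 1$ rotatable bonds.
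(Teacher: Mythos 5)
Your proof is correct and follows essentially the same route as the paper's: surjectivity by definition of $\mathcal{M}_\cc$, then injectivity by showing that the center of mass pins down $\rr$, the rigid-motion-invariant torsion angles pin down $\boldsymbol{\theta}$, and the residual rotation pins down $R$ up to the same collinearity caveat the paper notes. Phrasing it as an explicit inverse rather than the paper's proof by contradiction is an immaterial difference.
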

\begin{proof}
    Since we defined $\mathcal{M}_\cc = \{ A(g, \cc) \mid g \in \PP\}$, $A(\cdot, \cc)$ is automatically surjective. We now show that it is injective. Assume for the sake of contradiction that $A(\cdot, \cc)$ is not injective, so that there exist elements of the product space $g_1, g_2 \in \PP$ with $g_1 \neq g_2$ but with $A(g_1, \cc) = A(g_2, \cc) = \cc'$. That is,
    \begin{equation}\label{eq:injective}
        A_\text{tr}(\rr_1, A_\text{rot}(R_1, A_\text{tor}(\boldsymbol{\theta}_1, \cc))) = A_\text{tr}(\rr_2, A_\text{rot}(R_2, A_\text{tor}(\boldsymbol{\theta}_2, \cc)))
    \end{equation}
    which we abbreviate as $\cc^{(1)} = \cc^{(2)}$. Since only $A_\text{tr}$ changes the center of mass $\sum_i \cc_i /n$, we have $\sum_i \cc^{(1)}_i/n = \sum_i \cc_i/n + \rr_1$ and $\sum_i \cc^{(2)}_i/n = \sum_i \cc_i/n + \rr_2$. However, since $\cc^{(1)} = \cc^{(2)}$, this implies $\rr_1 = \rr_2$. Next, consider the torsion angles $\boldsymbol{\tau}_1 = (\tau^{(1)}_1, \ldots \tau^{(1)}_m)$ of $\cc^{(1)}$ corresponding to some choice of dihedral angles at each rotatable bond. Because $A_\text{tr}$ and $A_\text{rot}$ are rigid-body motions, only $A_\text{tor}$ changes the dihedral angles; in particular, by definition we have $\tau^{(1)}_i \cong \tau_i + \theta^{(1)}_i \mod 2\pi$ and $\tau^{(2)}_i \cong \tau_i + \theta^{(2)}_i \mod 2\pi$ for all $i=1,\ldots m$. However, because $\tau^{(1)}_i = \tau^{(2)}_i$, this means $\theta^{(1)}_i \cong \theta^{(2)}_i$ for all $i$ and therefore $\boldsymbol{\theta}_1 = \boldsymbol{\theta}_2$ (as elements of $SO(2)^m$). Now denote $\cc^\star = A_\text{tor}(\boldsymbol{\theta}_1, \cc) = A_\text{tor}(\boldsymbol{\theta}_2, \cc)$ and apply $A_\text{tr}(-\rr_1, \cdot) = A_\text{tr}(-\rr_2, \cdot)$ to both sides of Equation \ref{eq:injective}. We then have
    \begin{equation}
        A_\text{rot}(R_1, \cc^\star) = A_\text{rot}(R_2, \cc^\star)
    \end{equation}
    which further leads to
    \begin{equation}
        \cc^\star - \bar\cc^\star = R_1^{-1}R_2(\cc^\star-\bar\cc^\star)
    \end{equation}
    In general, this does not imply that $R_1 = R_2$. However, $R_1 \neq R_2$ is possible only if $\cc^\star$ is \emph{degenerate}, in the sense that all points are collinear along the shared axis of rotation of $R_1, R_2$. However, in practice, conformers never consist of a collinear set of points, so we can safely assume $R_1 = R_2$. We now have $(\rr_1, R_1, \boldsymbol{\theta}_1) = (\rr_2, R_2, \boldsymbol{\theta}_2)$, or $g_1 = g_2$, contradicting our initial assumption. We thus conclude that $A(\cdot, \cc)$ is injective, completing the proof.
\end{proof}

\section{Training and Inference} \label{app:train_inf}

In this section we present the training and inference procedures of the diffusion generative model. First, however, there are a few subtleties of the generative approach to molecular docking that are worth mentioning. Unlike the standard generative modeling setting where the dataset consists of many samples drawn from the data distribution, each training example $(\xx^\star, \yy)$ of protein structure $\yy$ and ground-truth ligand pose $\xx^\star$ is the \emph{only sample} from the corresponding conditional distribution $p_{\xx^\star}(\cdot \mid \yy)$ defined over $\mathcal{M}_{\xx^\star}$. Thus, the innermost training loop iterates over distinct conditional distributions $p_{\xx^\star}(\cdot \mid \yy)$, along with a single sample from that distribution, rather than over samples from a common data distribution $p_\text{data}(\xx)$.

% Originally in the main text:
%The training and inference procedures technically depend on the choice of seed conformation $\cc$ used to define the mapping between $\mathcal{M}_\cc$ and the product space. However, providing a definite choice of $\cc$ to the score model introduces an arbitrary inference-time parameter that may affect the final predicted distribution, which is undesirable. In other words, while $\cc$ defines the manifold of ligand poses, the precise location of $\cc$ within that manifold should not affect the predicted distribution. Thus, we develop approximate training and inference procedures that remove the dependence on the $\cc$; intuitively, these assume that updates to points in the product space $\PP$ can be applied to ligand poses in $\mathcal{M}_\cc$ directly, without referencing the origin conformer $\cc$. While these are only an approximation of the theoretically correct procedures, we find that they work well in practice. In Appendix~\ref{app:train_inf}, we present the training and inference procedures in more detail and further discussion on this point.

As discussed in Section~\ref{sec:diffusion_model}, during inference, $\cc$ is the ligand structure generated with a method such as RDKit. However, during training we require $\mathcal{M}_\cc = \mathcal{M}_{\xx^\star}$ in order to define a bijection between $\cc \in \mathcal{M}_{\xx^\star}$ and $\PP$. If we take $\cc \in \mathcal{M}_{\xx^\star}$, there will be a distribution shift between the manifolds $\mathcal{M}_\cc$ considered at training time and those considered at inference time. To circumvent this issue, at training time we predict $\cc$ with RDKit and replace $\xx^\star$ with $\argmin_{\xx^\dagger \in \mathcal{M}_\cc} \rmsd(\xx^\star, \xx^\dagger)$ using the conformer matching procedure described in \citet{jing2022torsional}.

The above paragraph may be rephrased more intuitively as follows: during inference, the generative model docks a ligand structure generated by RDKit, keeping its non-torsional degrees of freedom (e.g., local structures) fixed. At training time, however, if we train the score model with the local structures of the ground truth pose, this will not correspond to the local structures seen at inference time. Thus, at training time, we replace the ground truth pose by generating a ligand structure with RDKit and aligning it to the ground truth pose while keeping the local structures fixed.

With these preliminaries, we now continue to the full procedures (Algorithms~\ref{alg:training} and \ref{alg:inference}). The training and inference procedures of a score-based diffusion generative model on a Riemannian manifold consist of (1) sampling and regressing against the score of the diffusion kernel during training; and (2) sampling a geodesic random walk with the score as a drift term during inference \citep{de2022riemannian}. Because we have developed the diffusion process on $\PP$ but continue to provide the score model with elements in $\mathcal{M}_\cc \subset \mathbb{R}^{3n}$, the full training and inference procedures involve repeatedly interconverting between the two spaces using the bijection given by the seed conformation $\cc$.

\begin{algorithm}[h]
\caption{Training procedure (single epoch)}\label{alg:training}
\KwIn{Training pairs $\{(\xx^\star, \yy)\}$, RDKit predictions $\{\cc\}$}
\ForEach{$\cc, \xx^\star, \yy$}{
    Let $\xx_0 \gets \argmin_{\xx^\dagger \in \mathcal{M}_\cc} \rmsd(\xx^\star, \xx^\dagger)$\;
    Compute $(\rr_0, R_0, \boldsymbol{\theta}_0) \gets A^{-1}_\cc(\xx_0)$\;
    Sample $t \sim \uni([0,1])$\;
    Sample $\Delta\rr, \Delta R, \Delta \boldsymbol{\theta}$ from diffusion kernels $p^\text{tr}_t(\cdot \mid 0), p^\text{rot}_t(\cdot \mid 0), p^\text{tor}_t(\cdot \mid 0)$\;
    Set $\rr_t \gets \rr_0 + \Delta \rr$\;
    Set $R_t \gets (\Delta R)R_0$\;
    Set $\boldsymbol{\theta}_t \gets \boldsymbol{\theta}_0 + \Delta\boldsymbol{\theta}\mod 2\pi$\;
    Compute $\xx_t \gets A((\rr_t, R_t, \boldsymbol{\theta}_t), \cc)$\;
    Predict scores $\alpha \in \mathbb{R}^3, \beta \in \mathbb{R}^3, \gamma \in \mathbb{R}^m = \ss(\xx_t, \cc, \yy, t)$ \;
    Take optimization step on loss $\mathcal{L} = \lnorm\alpha - \nabla \log p^\text{tr}_t(\Delta \rr \mid 0) \rnorm^2 + \lnorm\beta - \nabla \log p^\text{rot}_t(\Delta R \mid 0) \rnorm^2 + \lnorm\gamma - \nabla \log p^\text{tor}_t(\Delta\boldsymbol{\theta} \mid 0) \rnorm^2$
}
\end{algorithm}

\begin{algorithm}[h]
\caption{Inference procedure}\label{alg:inference}
\KwIn{RDKit prediction $\cc$, protein structure $\yy$ (both centered at origin)}
\KwOut{Sampled ligand pose $\xx_0$}
Sample $\boldsymbol{\theta}_N \sim \uni(SO(2)^m)$, $ R_N \sim \uni(SO(3))$, $\rr_N \sim \mathcal{N}(0, \sigma_\text{tor}^2(T))$\;
Let $\xx_N = A((\rr_N, R_N, \boldsymbol{\theta}_N), \cc)$\;
\For{n $\leftarrow N$ \KwTo $1$}{
    Let $t = n/N$ and $\Delta \sigma^2_\text{tr} = \sigma^2_\text{tr}(n/N) - \sigma^2_\text{tr}((n-1)/N)$ and similarly for $\Delta \sigma^2_\text{rot}, \Delta \sigma^2_\text{tor}$\;
    Predict scores $\alpha \in \mathbb{R}^3, \beta \in \mathbb{R}^3, \gamma \in \mathbb{R}^m \gets \ss(\xx_n, \cc, \yy, t)$\;
    Sample $\mathbf{z}_\text{tr}, \mathbf{z}_\text{rot}, \mathbf{z}_\text{tor}$ from $\mathcal{N}(0, \Delta\sigma^2_\text{tr}),\mathcal{N}(0, \Delta\sigma^2_\text{rot}), \mathcal{N}(0, \Delta\sigma^2_\text{tor}) $ respectively\;
    Set $\rr_{n-1} \gets \rr_n + \Delta\sigma^2_\text{tr} \alpha + \mathbf{z}_\text{tr}$\;
    Set $R_{n-1} \gets \mathbf{R}(\Delta\sigma^2_\text{rot} \beta + \mathbf{z}_\text{rot})R_n$\;
    Set $\boldsymbol{\theta}_{n-1} \gets \boldsymbol{\theta}_{n} + (\Delta\sigma^2_\text{tor} \gamma + \mathbf{z}_\text{tor})\mod 2\pi$\;
    Compute $\xx_{n-1} \gets A((\rr_{n-1}, R_{n-1}, \boldsymbol{\theta}_{n-1}), \cc)$\;
}
Return $\xx_0$\;
\end{algorithm}

However, as noted in the main text, the dependence of these procedures on the exact choice of $\cc$ is potentially problematic, as it suggests that at inference time, the model distribution may be different depending on the orientation and torsion angles of $\cc$. Simply removing the dependence of the score model on $\cc$ is not sufficient since the update steps themselves still occur on $\PP$ and require a choice of $\cc$ to be mapped to $\mathcal{M}_\cc$. However, notice that the update steps---in both training and inference---consist of (1) sampling the diffusion kernels at the origin; (2) applying these updates to the point on $\PP$; and (3) transferring the point on $\PP$ to $\mathcal{M}_\cc$ via $A(\cdot, \cc)$. Might it instead be possible to apply the updates to 3D ligand poses $\xx \in \mathcal{M}_\cc$ \emph{directly}?

It turns out that the notion of applying these steps to ligand poses ``directly'' corresponds to the formal notion of \emph{group action}. The operations $A_\text{tr}, A_\text{rot}, A_\text{tor}$ that we have already defined are formally group actions if they satisfy $A_{(\cdot)}(g_1g_2, \xx) = A(g_1, A(g_2, \xx))$. While true for $A_\text{tr}, A_\text{rot}$, this is not generally true for $A_\text{tor}$ if we take $SO(2)^m$ to be the direct product group; however, the approximation is increasingly good as the magnitude of the torsion angle updates decreases. If we then define $\PP$ to be the direct product group of its constituent groups, $A$ is a group action of $\PP$ on $\mathcal{M}_\cc$, as the operations of $A_\text{tr}, A_\text{rot}, A_\text{tor}$ commute and are (under the approximation) individually group actions.

The implication of $A$ being a group action can be seen as follows. Let $\delta = g_bg_a^{-1}$ be the update which brings $g_a \in \PP$ to $g_b \in \PP$ via left multiplication, and let $\xx_a, \xx_b$ be the corresponding ligand poses $A(g_a,\cc), A(g_b,\cc)$. Then
\begin{equation}
    \xx_b = A(g_bg_a^{-1}g_a, \cc) = A(\delta, \xx_a)
\end{equation}
which means that the updates $\delta$ can be applied directly to $\xx_a$ using the operation $A$. The training and inference procedures then become Algorithm \ref{alg:approx-training} and \ref{alg:approx-inference} below. The initial conformer $\cc$ is no longer used, except in the initial steps to define the manifold---to find the closest point to $\xx^\star$ in training, and to sample $\xx_N$ from the prior over $\mathcal{M}_\cc$ in inference.

Conceptually speaking, this procedure corresponds to ``forgetting'' the location of the origin element on $\mathcal{M}_\cc$, which is permissible because a change of the origin to some equivalent seed $\cc' \in \mathcal{M}_\cc$ merely translates---via right multiplication by $A^{-1}_\cc(\cc')$---the original and diffused data distributions on $\PP$, but does not cause any changes on $\mathcal{M}_\cc$ itself. The training and inference routines involve updates---formally left multiplications---to group elements, but as left multiplication on the group corresponds to group actions on $\mathcal{M}_\cc$, the updates can act on $\mathcal{M}_\cc$ directly, without referencing the origin $\cc$.

We find that the approximation of $A$ as a group action works quite well in practice and use Algorithms \ref{alg:approx-training} and \ref{alg:approx-inference} for all training and experiments discussed in the paper. Of course, disentangling the torsion updates from rotations in a way that makes $A_\text{tor}$ exactly a group action would justify the procedure further, and we regard this as a possible direction for future work.

\begin{algorithm}[h]
\caption{Approximate training procedure (single epoch)}\label{alg:approx-training}
\KwIn{Training pairs $\{(\xx^\star, \yy)\}$, RDKit predictions $\{\cc\}$}
\ForEach{$\cc, \xx^\star, \yy$}{
    Let $\xx_0 \gets \argmin_{\xx^\dagger \in \mathcal{M}_\cc} \rmsd(\xx^\star, \xx^\dagger)$\;
    Sample $t \sim \uni([0,1])$\;
    Sample $\Delta\rr, \Delta R, \Delta \boldsymbol{\theta}$ from diffusion kernels $p^\text{tr}_t(\cdot \mid 0), p^\text{rot}_t(\cdot \mid 0), p^\text{tor}_t(\cdot \mid 0)$\;
    Compute $\xx_t \gets A((\Delta\rr, \Delta R, \Delta \boldsymbol{\theta}), \xx_0)$\;
    Predict scores $\alpha \in \mathbb{R}^3, \beta \in \mathbb{R}^3, \gamma \in \mathbb{R}^m = \ss(\xx_t, \yy, t)$ \;
    Take optimization step on loss $\mathcal{L} = \lnorm\alpha - \nabla\log p^\text{tr}_t(\Delta \rr \mid 0) \rnorm^2 + \lnorm\beta - \nabla \log p^\text{rot}_t(\Delta R \mid 0) \rnorm^2 + \lnorm\gamma - \nabla \log p^\text{tor}_t(\Delta\boldsymbol{\theta} \mid 0) \rnorm^2$
}
\end{algorithm}

\begin{algorithm}[h]
\caption{Approximate inference procedure}\label{alg:approx-inference}
\KwIn{RDKit prediction $\cc$, protein structure $\yy$ (both centered at origin)}
\KwOut{Sampled ligand pose $\xx_0$}
Sample $\boldsymbol{\theta}_N \sim \uni(SO(2)^m)$, $ R_N \sim \uni(SO(3))$, $\rr_N \sim \mathcal{N}(0, \sigma_\text{tor}^2(T))$\;
Let $\xx_N = A((\rr_N, R_N, \boldsymbol{\theta}_N), \cc)$\;
\For{n $\leftarrow N$ \KwTo $1$}{
    Let $t = n/N$ and $\Delta \sigma^2_\text{tr} = \sigma^2_\text{tr}(n/N) - \sigma^2_\text{tr}((n-1)/N)$ and similarly for $\Delta \sigma^2_\text{rot}, \Delta \sigma^2_\text{tor}$\;
    Predict scores $\alpha \in \mathbb{R}^3, \beta \in \mathbb{R}^3, \gamma \in \mathbb{R}^m \gets \ss(\xx_n, \yy, t)$\;
    Sample $\mathbf{z}_\text{tr}, \mathbf{z}_\text{rot}, \mathbf{z}_\text{tor}$ from $\mathcal{N}(0, \Delta\sigma^2_\text{tr}),\mathcal{N}(0, \Delta\sigma^2_\text{rot}), \mathcal{N}(0, \Delta\sigma^2_\text{tor}) $ respectively\;
    Set $\Delta\rr \gets \Delta\sigma^2_\text{tr} \alpha + \mathbf{z}_\text{tr}$\;
    Set $\Delta R \gets  \mathbf{R}(\Delta\sigma^2_\text{rot} \beta + \mathbf{z}_\text{rot})$\;
    Set $\Delta\boldsymbol{\theta} \gets \Delta\sigma^2_\text{tor} \gamma + \mathbf{z}_\text{tor}$\;
    Compute $\xx_{n-1} \gets A((\Delta\rr, \Delta R, \Delta\boldsymbol{\theta}), \xx_n)$\;
}
Return $\xx_0$\;
\end{algorithm}

\section{Architecture Details} \label{app:architecture}

As summarized in Section~\ref{sec:architecture}, we use convolutional networks based on tensor products of irreducible representations (irreps) of SO(3) \citep{thomas2018tensor} as architecture for both the score and confidence models. In particular, these are implemented using the \verb|e3nn| library \citep{e3nn}. Below, $\otimes_w$ refers to the spherical tensor product of irreps with path weights $w$, and $\oplus$ refers to normal vector addition (with possibly padded inputs). Features have multiple channels for each irrep.
Both the architectures can be decomposed into three main parts: embedding layer, interaction layers, and output layer. We outline each of them below.

\subsection{Embedding layer}

\textbf{Geometric heterogeneous graph.} Structures are represented as heterogeneous geometric graphs with nodes representing ligand (heavy) atoms, receptor residues (located in the position of the $\alpha$-carbon atom), and receptor (heavy) atoms (only for the confidence model). Because of the high number of nodes involved, it is necessary for the graph to be sparsely connected for runtime and memory constraints. Moreover, sparsity can act as a useful inductive bias for the model, however, it is critical for the model to find the right pose that nodes that might have a strong interaction in the final pose to be connected during the diffusion process. Therefore, to build the radius graph, we connect nodes using cutoffs that are dependent on the types of nodes they are connecting:
\begin{enumerate}
    \item Ligand atoms-ligand atoms, receptor atoms-receptor atoms, and ligand atoms-receptor atoms interactions all use a cutoff of 5\AA{}, standard practice for atomic interactions. For the ligand atoms-ligand atoms interactions we also preserve the covalent bonds as separate edges with some initial embedding representing the bond type (single, double, triple and aromatic). For receptor atoms-receptor atoms interactions, we limit at 8 the maximum number of neighbors of each atom. Note that the ligand atoms-receptor atoms only appear in the confidence model where the final structure is already set.
    \item Receptor residues-receptor residues use a cutoff of 15 \AA{} with 24 as the maximum number of neighbors for each residue.
    \item Receptor residues-ligand atoms use a cutoff of $20+3*\sigma_{tr}$ \AA{} where $\sigma_{tr}$ represents the current standard deviation of the diffusion translational noise present in each dimension (zero for the confidence model). Intuitively this guarantees that with high probability, any of the ligands and receptors that will be interacting in the final pose the diffusion model converges to are connected in the message passing at every step.
    \item Finally, receptor residues are connected to the receptor atoms that form the corresponding amino-acid.
\end{enumerate}

\textbf{Node and edge featurization.} For the receptor residues, we use the residue type as a feature as well as a language model embedding obtained from ESM2 \citep{Lin2022ESM2}. The ligand atoms have the following features: atomic number; chirality; degree; formal charge; implicit valence; the number of connected hydrogens; the number of radical electrons; hybridization type; whether or not it is in an aromatic ring; in how many rings it is; and finally, 6 features for whether or not it is in a ring of size 3, 4, 5, 6, 7, or 8. These are concatenated with sinusoidal embeddings of the diffusion time \citep{vaswani2017attention} and, in the case of edges, radial basis embeddings of edge length \citep{schutt2017schnet}. These scalar features of each node and edge are then transformed with learnable two-layer MLPs (different for each node and edge type) into a set of scalar features that are used as initial representations by the interaction layers.

\textbf{Notation} Let $(\mathcal{V}, \mathcal{E})$ represent the heterogeneous graph, with $\mathcal{V} = (\mathcal{V}_\ell, \mathcal{V}_r)$ respectively ligand atoms and receptor residues (receptor atoms $\mathcal{V}_a$, present in the confidence model, are for simplicity not included here), and similarly $\mathcal{E}=(\mathcal{E}_{\ell\ell},\mathcal{E}_{\ell r},\mathcal{E}_{r\ell},\mathcal{E}_{rr})$. Let $\mathbf{h}_a$ be the node embeddings (initially only scalar channels) of node $a$, $e_{ab}$ the edge embeddings of $(a,b)$, and $\mu(r_{ab})$ radial basis embeddings of the edge length. Let $\sigma_{tr}^2$, $\sigma_{rot}^2$, and $\sigma_{tor}^2$ represent the variance of the diffusion kernel in each of the three components: translational, rotational and torsional.

\subsection{Interaction layers}

At each layer, for every pair of nodes in the graph, we construct messages using tensor products of the current node features with the spherical harmonic representations of the edge vector. The weights of this tensor product are computed based on the edge embeddings and the \emph{scalar} features---denoted $\mathbf{h}^0_a$---of the outgoing and incoming nodes. The messages are then aggregated at each node and used to update the current node features. For every node $a$ of type $t_a$:
\begin{equation}
\begin{gathered}
\mathbf{h}_a \leftarrow \mathbf{h}_a \underset{t\in \{ \ell, r\}}{\oplus} \textsc{BN}^{(t_a,t)} \Bigg( \frac{1}{|\mathcal{N}_a^{(t)}|}\sum_{b \in \mathcal{N}_a^{(t)}} Y(\hat r_{ab}) \; \otimes_{\psi_{ab}} \; \mathbf{h}_b \Bigg) \\
\text{with} \; \psi_{ab} = \Psi^{(t_a,t)}(e_{ab}, \mathbf{h}^0_a, \mathbf{h}^0_b)
\end{gathered}
\end{equation}
Here, $t$ indicates an arbitrary node type, $\mathcal{N}_a^{(t)} = \{b \mid (a,b) \in \mathcal{E}_{t_a t}\}$ the neighbors of $a$ of type $t$, $Y$ are the spherical harmonics up to $\ell=2$, and $\textsc{BN}$ the (equivariant) batch normalisation. The orders of the output are restricted to a maximum of $\ell=1$. All learnable weights are contained in $\Psi$, a dictionary of MLPs, which uses different sets of weights for different edge types (as an ordered pair so four types for the score model and nine for the confidence) and different rotational orders. Convolutional layers simultaneously operate with different sets of weights for different connection types (so 9 sets of weights for the confidence model and 4 for the score at every layer) and generate scalar and vector representations for each node. 

\subsection{Output layer} 

The ligand atom representations after the final interaction layer are used in the output layer to produce the required outputs. This is where the score and confidence architecture differ significantly. On one hand, the score model's output is in the tangent space $T_\rr \mathbb{T}_3 \oplus T_R SO(3) \oplus T_{\boldsymbol{\theta}} SO(2)^m$. This corresponds to having two $SE(3)$-equivariant output vectors representing the translational and rotational score predictions and $m$ $SE(3)$-invariant output scalars representing the torsional score. For each of these, we design final tensor-product convolutions inspired by classical mechanics. On the other hand, the confidence model outputs a single $SE(3)$-invariant scalar representing the confidence score. Below we detail how each of these outputs is generated.

\textbf{Translational and rotational scores.} The translational and rotational score intuitively represent, respectively, the linear acceleration of the center of mass of the ligand and the angular acceleration of the rest of the molecule around the center. Considering the ligand as a rigid object and given a set of forces and masses at each ligand, a tensor product convolution between the atoms and the center of mass would be capable of computing the desired quantities. Therefore, for each of the two outputs, we perform a convolution of each of the ligand atoms with the (unweighted) center of mass $c$. \begin{equation}
\begin{gathered}
\mathbf{v} \leftarrow \frac{1}{|\mathcal{V}_\ell|}\sum_{a \in \mathcal{V}_\ell} Y(\hat r_{ca}) \; \otimes_{\psi_{ca}} \; \mathbf{h}_a \\
\text{with} \; \psi_{ca} = \Psi(\mu(r_{ca}), \mathbf{h}^0_a)
\end{gathered}
\end{equation}
We restrict the output of $\mathbf{v}$ to a single odd and a single even vectors (for each of the two scores). Since we are using coarse-grained representations of the protein, the score will neither be even nor odd; therefore, we sum the even and odd vector representations of $\mathbf{v}$. Finally, the magnitude (but not direction) of these vectors is adjusted with an MLP taking as input the current magnitude and the sinusoidal embeddings of the diffusion time. Finally, we (revert the normalization) by multiplying the outputs by $1/\sigma_{tr}$ for the translational score and by the expected magnitude of a score in $SO(3)$ with diffusion parameter $\sigma_{rot}$ (precomputed numerically). 

\textbf{Torsional score. } To predict the $m$ $SE(3)$-invariant scalar describing the torsional score, we use a pseudotorque layer similar to that of \citet{jing2022torsional}. This predicts a scalar score $\delta\tau$ for each rotatable bond from the per-node outputs of the atomic convolution layers. For rotatable bond $g = (g_0, g_1)$ and $b \in \mathcal{V_\ell}$, let $r_{gb}$ and $\hat{r}_{gb}$ be the magnitude and direction of the vector connecting the center of bond $g$ and $b$. We construct a convolutional filter $T_g$ for each bond $g$ from the tensor product of the spherical harmonics with a $\ell=2$ representation of the {bond axis} $\hat{r}_g$:\footnote{Since the parity of the $\ell=2$ spherical harmonic is even, this representation is indifferent to the choice of bond direction.}
\begin{equation}
T_g(\hat{r}) := Y^2(\hat{r}_{g}) \otimes Y(\hat{r})
\end{equation}
$\otimes$ is the full (i.e., unweighted) tensor product as described in \citet{geiger2022e3nn}, and the second term contains the spherical harmonics up to $\ell=2$ (as usual). This filter (which contains orders up to $\ell=3$) is then used to convolve with the representations of every neighbor on a radius graph:
\begin{equation}
\begin{gathered}
\mathcal{E}_\tau = \{ (g,b) \mid g \text{ a rotatable bond}, b \in \mathcal{V}_\ell\} \\ 
\quad e_{gb} = \Upsilon^{(\tau)}(\mu(r_{gb})) \quad \forall (g,b)\in\mathcal{E}_\tau\\
\mathbf{h}_g = \frac{1}{|\mathcal{N}_g|} \sum_{b \in \mathcal{N}_g} T_{g}(\hat r_{gb}) \otimes_{\gamma_{gb}} \mathbf{h}_b \\
\text{with} \; \gamma_{gb} = \Gamma(e_{gb}, \mathbf{h}_{b}^0, \mathbf{h}_{g_0}^{0} + \mathbf{h}_{g_1}^0)
\end{gathered}\end{equation}
Here, $\mathcal{N}_g = \{b \mid (g,b) \in \mathcal{E}_{\tau} \}$ and $\Upsilon^{(\tau)}$ and $\Gamma$ are MLPs with learnable parameters. Since unlike \citet{jing2022torsional}, we use coarse-grained representations the parity also here is neither even nor odd, the irreps in the output are restricted to arrays both even $\mathbf{h}'_g$ and odd $\mathbf{h}''_g$ scalars. Finally, we produce a single scalar prediction for each bond:
\begin{equation}
\delta\tau_g = \Pi(\mathbf{h}'_g+\mathbf{h}''_g)
\end{equation}
where $\Pi$ is a two-layer MLP with $\tanh$ nonlinearity and no biases. This is also ``denormalized" by multiplying by the expected magnitude of a score in $SO(2)$ with diffusion parameter $\sigma_{tor}$.

\textbf{Confidence output.} The single $SE(3)$-invariant scalar representing the confidence score output is instead obtained by concatenating the even and odd final scalar representation of each ligand atom, averaging these feature vectors among the different atoms, and finally applying a three layers MLP (with batch normalization).

\section{Experimental Details} \label{app:exp_details}
In general, all our code is available at \url{https://github.com/gcorso/DiffDock}. This includes running the baselines, runtime calculations, training and inference scripts for \textsc{DiffDock}, the PDB files of \textsc{DiffDock}'s predictions for all 363 complexes of the test set, and visualization videos of the reverse diffusion.

\subsection{Experimental Setup} \label{appx:experimental_setup}
\textbf{Data.} We use the molecular complexes in PDBBind \citep{liu2017PDBBind} that were extracted from the Protein Data Bank (PDB) \citep{berman2003PDB}. We employ the time-split of PDBBind proposed by \citet{equibind} with 17k complexes from 2018 or earlier for training/validation and 363 test structures from 2019 with no ligand overlap with the training complexes. This is motivated by the further adoption of the same split \citep{Lu2022TankBind} and the critical assessment of PDBBind splits by \citet{Volkov2022PDBBindSplits} who favor temporal splits over artificial splits based on molecular scaffolds or protein sequence/structure similarity. For completeness, we also report the results on protein sequence similarity splits in Appendix~\ref{appx:additional_results}. \new{We download the PDBBind data as it is provided by EquiBind from \texttt{https://zenodo.org/record/6408497}. These files were preprocessed with Open Babel before adding any potentially missing hydrogens, correcting hydrogens, and correctly flipping histidines with the \texttt{reduce} library available at \texttt{https://github.com/rlabduke/reduce}.}

\textbf{Metrics.} To evaluate the generated complexes, we compute the heavy-atom RMSD between the predicted and the crystal ligand atoms when the protein structures are aligned. To account for permutation symmetries in the ligand, we use the symmetry-corrected RMSD of sPyRMSD \citep{spyrmsd2020}. For these RMSD values, we report the percentage of predictions that have an RMSD that is less than 2\AA{}.  We choose 2\AA{} since much prior work considers poses with an RMSD less that 2\AA{} as ``good" or successful \citep{Alhossary2015QuickVina2, Hassan2017QVinaW, mcnutt2021gnina}. This is a chemically relevant metric, unlike the mean RMSD as detailed in Section~\ref{sec:generative_modeling} since for further downstream analyses such as determining function changes, a prediction is only useful below a certain RMSD error threshold. Less relevant metrics such as the mean RMSD are provided in Appendix~\ref{appx:additional_results}.

\subsection{Apo-structure docking}\label{appx:apodocking}

Although large and comprehensive, the PDBBind benchmark only evaluates the capacity that various docking methods have to bind ligands to their corresponding receptor holo-structure. This is a much simpler and less realistic scenario than what is typically encountered in real applications where docking for new ligands is done against apo or holo-structures bound to a different ligand. In particular, since the development of accurate protein folding methods \citep{jumper2021highly}, docking programs are often run on top of AI-generated protein structures. With this in mind, we develop a new benchmark where we combine the complex prediction of PDBBind with protein structures generated by ESMFold \citep{Lin2022ESM2}. 

The structures were obtained running \texttt{esmfold\_v1} on the sequences extracted from the PDB protein files of PDBBind. Protein complexes were passed to ESMFold by concatenating (with \texttt{':'}) the sequences of the proteins and removing all water molecules and other ligands. Predictions were run on 48GB A6000 GPUs, but for 12/361 complexes the prediction ran out of memory even when reducing the \texttt{chunk\_size} and were, therefore, discarded.   

The main design choice when generating this benchmark relies on how to best align the PDBBind complex with the ESMFold structure to obtain the "ground-truth" docked prediction on the ESMFold structure. An unbiased global alignment of the two protein structures is not desirable because a difference in structure not affecting the pocket where the ligand binds would cause the two pockets to misalign; on the other hand, only aligning residues within a single arbitrary pocket cutoff has many undesirable cases where too many or too few residues are selected or not weighted properly. 

Instead, we align receptors' residues with the Kabsch algorithm using exponential weighting, for every receptor $\xx$ its weight is $w_{\xx} = e^{-\lambda \; d_{\xx}}$ where $\lambda$ is a smoothing factor and $d_{\xx}$ is the minimum distance of $\xx$ to a ligand atom in the original complex, this way residues closer to the ligand will have a higher weight in the alignment. For each complex, we individually select $\lambda \in [0,1]$ so that it preserves distances as best as possible, in particular, we use the L-BFGS-B \citep{byrd1995limited} from \texttt{scipy} \citep{virtanen2020scipy} to minimize:
\begin{equation*}
    \lambda^* = \min_\lambda \;  \sum_{\xx \in \mathcal{X}} \;  \sum_{\yy \in \mathcal{Y}} \; \bigg(\frac{1}{\lVert \xx_c - \yy \rVert} - \frac{1}{\lVert \xx_e(\lambda) - \yy \rVert} \bigg)^2
\end{equation*}
where $\lVert \xx_c - \yy \rVert$ and $\lVert \xx_e(\lambda) - \yy \rVert$ correspond to the distances between protein residue $\xx$ and ligand atom $\yy$ respectively in the original crystal structure from PDBBind and in the complex structure obtained aligning the ESMFold structure with smoothing parameter $\lambda$. We use inverse distances to give more importance to residues closer to the ligand (in either structure) and avoid steric clashes. We only consider protein backbones because the side-chain predictions are often less reliable and their structure typically changes upon binding.

Thus we obtain protein structures on which we run the docking methods and the associated docked ligand positions that we use to evaluate them.

\subsection{Implementation details: hyperparameters, training, and runtime measurement} \label{appx:hyperparameters}
\textbf{Training Details.} We use Adam \citep{kingma2014adam} as optimizer for the diffusion and the confidence model. The diffusion model with which we run inference uses the exponential moving average of the weights during training, and we update the moving average after every optimization step with a decay factor of 0.999. The batch size is 16. We run inference with 20 denoising steps on 500 validation complexes every 5 epochs and use the set of weights with the highest percentage of RMSDs less than 2\AA{} as the final diffusion model. We trained our final score model on four 48GB RTX A6000 GPUs for 850 epochs (around 18 days). The confidence model is trained on a single 48GB GPU. For inference, only a single GPU is required. Scaling up the model size seems to improve performance and future work could explore whether this trend continues further. For the confidence model uses the validation cross-entropy loss is used for early stopping and training only takes 75 epochs. Code to reproduce all results including running the baselines or to perform docking calculations for new complexes is available at \url{https://github.com/gcorso/DiffDock}.

\textbf{Hyperparameters.} For determining the hyperparameters of \textsc{DiffDock}'s score model, we trained smaller models (3.97 million parameters) that fit into 48GB of GPU RAM before scaling it up to the final model (20.24 million parameters) that was trained on four 48GB GPUs. The smaller models were only trained for 250 or 300 epochs, and we used the fraction of predictions with an RMSD below 2\AA{} on the validation set to choose the hyperparameters. Table~\ref{tab:hyperparameters} shows the main hyperparameters we tested and the final parameters of the large model we use to obtain our results. We only did little tuning for the minimum and maximum noise levels of the three components of the diffusion. For the translation, the maximum standard deviation is 19\AA{}. We also experimented with second-order features for the Tensor \new{F}ield Network but did not find them to help. The complete set of hyperparameters next to the main ones we describe here can be found in our repository. \new{From the start we have divided the inference schedule into 20 time steps, the effect of using more or fewer steps for inference is discussed in Appendix \ref{app:ablations}. As we found that the large-scale diffusion models overfit the training data on low-levels of noise we stop the diffusion early after 18 steps. At the last diffusion step no noise is added.}

The confidence model has 4.77 million parameters and the parameters we tried are in Table~\ref{tab:hyperparameters_confidence_model}. We generate 28 different training poses for the confidence model (for which it predicts whether or not they have an RMSD below 2\AA{}) with a \new{small} score model. The score model used to generate the training samples for the confidence model \new{does not need to be the same one that the model will be applied to at inference time.}

\begin{table}[htpb]
\caption[SearchSpace]{The hyperparameter options we searched through for \textsc{DiffDock}'s score model. This was done with small models before scaling up to a large model. The parameters shown here that impact model size (bottom half of the table) are those of the large model. The final parameters for the large \textsc{DiffDock} model are marked in \textbf{bold}.}
\label{tab:hyperparameters}
\begin{center}
\begin{small}
\begin{sc}
\begin{tabular}{lc}
\toprule
Parameter & Search Space  \\    
\midrule
using all atoms for the protein graph & Yes, \textbf{No}\\
using language model embeddings & \textbf{Yes}, No\\
using ligand hydrogens & Yes, \textbf{No}\\
using exponential moving average & \textbf{Yes}, {No}\\
maximum number of neighbors in protein graph & 10, 16, \textbf{24}, 30\\
maximum neighbor distance in protein graph & 5, 10, \textbf{15}, 18, 20, 30\\
distance embedding method & \textbf{sinusoidal}, gaussian \\
dropout & 0, 0.05, \textbf{0.1}, 0.2 \\
learning rates & 0.01, 0.008, 0.003, \textbf{0.001}, 0.0008, 0.0001\\
batch size & 8, \textbf{16}, 24\\
non linearities & \textbf{ReLU} \\
\midrule
convolution layers & 6 \\
number of scalar features &  48 \\
number of vector features &  10 \\
\bottomrule
\end{tabular}
\end{sc}
\end{small}
\end{center}
\vskip -0.1in
\end{table}

\begin{table}[htpb]
\caption[SearchSpace]{The hyperparameter options we searched through for \textsc{DiffDock}'s confidence model. The final parameters are marked in \textbf{bold}.}
\label{tab:hyperparameters_confidence_model}
\begin{center}
\begin{small}
\begin{sc}
\begin{tabular}{lc}
\toprule
Parameter & Search Space  \\    
\midrule
using all atoms for the protein graph & \textbf{Yes}, {No}\\
using language model embeddings & \textbf{Yes}, No\\
using ligand hydrogens & \textbf{No}\\
using exponential moving average & \textbf{No}\\
maximum number of neighbors in protein graph & 10, 16, \textbf{24}, 30\\
maximum neighbor distance in protein graph & 5, 10, \textbf{15}, 18, 20, 30\\
distance embedding method & \textbf{sinusoidal} \\
dropout & 0, 0.05, \textbf{0.1}, 0.2 \\
learning rates & 0.03, 0.003, \textbf{0.0003}, 0.00008\\
batch size & \textbf{16}\\
non linearities & \textbf{ReLU} \\
\midrule
convolution layers & 5 \\
number of scalar features &  24 \\
number of vector features &  6 \\
\bottomrule
\end{tabular}
\end{sc}
\end{small}
\end{center}
\vskip -0.1in
\end{table}

\textbf{Runtime.} Similar to all the baselines, the preprocessing times are not included in the reported runtimes. For \textsc{DiffDock} the preprocessing time is negligible compared to the rest of the inference time where multiple reverse diffusion steps are performed. Preprocessing mainly consists of a forward pass of ESM2 to generate the protein language model embeddings, RDKit's conformer generation, and the conversion of the protein into a radius graph. We measured the inference time when running on an RTX A100 40GB GPU when generating 10 samples. The runtimes we report for generating 40 samples and ranking them are extrapolations where we multiply the runtime for 10 samples by 4. In practice, this only gives an upper bound on the runtime with 40 samples, and the actual runtime should be faster.

\textbf{Statistical significance.} To determine the statistical significance of the superior performance of our method we used the paired two-sample t-test implemented in \texttt{scipy} \citep{2020SciPy-NMeth}. 

\subsection{Baselines: implementation, used scripts, and runtime details} \label{appx:baseline_details}
Our scripts to run the baselines are available at \url{https://github.com/gcorso/DiffDock}. For obtaining the runtimes of the different methods, we always used 16 CPUs except for GLIDE as explained below. The runtimes do not include any preprocessing time for any of the methods. For instance, the time that it takes to run P2Rank is not included for TANKBind, and P2Rank + SMINA/GNINA since this receptor preparation only needs to be run once when docking many ligands to the same protein. In applications where different receptors  are processed (such as reverse screening), the experienced runtimes for TANKBind and P2Rank + SMINA/GNINA will thus be higher.

We note that for all these baselines we have used the default hyperparameters unless specified differently below. Modifying some of these hyperparameters (for example the scoring method's exhaustiveness) will change the runtime and performance tradeoffs (e.g., if the searching routine is left running for longer then better poses are likely to be found), however, we leave these analyses to future work.

\textbf{SMINA} \citep{koes2013smina} improves Autodock Vina with a new scoring-function and user-friendliness. The default parameters were used with the exception of setting \texttt{--num\_modes 10}. To define the search box, we use the automatic box creation option around the receptor with the default buffer of 4\AA{} on all 6 sides.

\textbf{GNINA} \citep{mcnutt2021gnina} builds on SMINA by additionally using a learned 3D CNN for scoring. The default parameters were used with the exception of setting \texttt{--num\_modes 10}. To define the search box, we use the automatic box creation option around the receptor with the default buffer of 4\AA{} on all 6 sides.

\textbf{QuickVina-W} \citep{Hassan2017QVinaW} extends the speed-optimized QuickVina 2 \citep{Alhossary2015QuickVina2} for blind docking. We reuse the numbers from \citet{equibind} which had used the default parameters except for increasing the exhaustiveness to 64. \new{The files were preprocessed with the \texttt{prepare\_ligand4.py} and \texttt{prepare\_receptor4.py} scripts of the MGLTools library as it is recommended by the QuickVina-W authors.}

\new{\textbf{Autodock Vina} \citep{trott2010autodock} is older docking software that does not perform as well as the other more recent search-based baselines, but it is a well-established tool. We reuse the numbers reported in TANKBind \citep{Lu2022TankBind}}

\textbf{GLIDE} \citep{halgren2004glide} is a strong heavily used commercial docking tool. These methods all use biophysics based scoring-functions. We reuse the numbers from \citet{equibind} since we do not have a license. \new{Running GLIDE involves running their command line tools for preprocessing the structures into the files required to run the docking algorithm.} As explained by \citet{equibind}, the very high runtime of GLIDE with 1405 seconds per complex is partially explained by the fact that GLIDE only uses a single thread when processing a complex. This fact and the parallelization options of GLIDE are explained here \url{https://www.schrodinger.com/kb/1165}. With GLIDE, it is possible to start data-parallel processes that compute the docking results for a different complex in parallel. However, each process also requires a separate software license.

\textbf{EquiBind} \citep{equibind}, we reuse the numbers reported in their paper and generate the predictions that we visualize with their code at \url{https://github.com/HannesStark/EquiBind}.

\textbf{TANKBind} \citep{Lu2022TankBind}, we use the code associated with the paper at \url{https://github.com/luwei0917/TankBind}. The runtimes do not include the runtime of P2Rank or any preprocessing steps. In Table~\ref{tab:results_main} we report two runtimes (0.72/2.5 sec). The first is the runtime when making only the top-1 prediction and the second is for producing the top-5 predictions. Producing only the top-1 predictions is faster since TANKBind produces distance predictions that need to be converted to coordinates with a gradient descent algorithm and this step only needs to be run once for the top-1 prediction, while it needs to be run 5 times for producing 5 outputs. To obtain our runtimes we run the forward pass of TANKBind on GPU (0.28 seconds) with the default batch size of 5 that is used in their GitHub repository. To compute the time the distances-to-coordinates conversion step takes, we run the file \texttt{baseline\_run\_tankbind\_parallel.sh} in our repository, which parallelizes the computation across 16 processes which we also run on an Intel Xeon Gold 6230 CPU. This way, we obtain 0.44 seconds runtime for the conversion step of the top-1 prediction (averaged over the 363 complexes of the testset). 

\textbf{P2Rank} \citep{krivak2018p2rank}, is a tool that predicts multiple binding pockets and ranks them. We use it for running TANKBind and P2Rank + SMINA/GNINA. We download the program from \url{https://github.com/rdk/p2rank} and run it with its default parameters.

\textbf{EquiBind + SMINA/GNINA} \citep{equibind}, the bounding box in which GNINA/SMINA searches for binding poses is constructed around the prediction of EquiBind with the \texttt{--autobox\_ligand} option of GNINA/SMINA. EquiBind is thus used to find the binding pocket and SMINA/GNINA to find the exact final binding pose. We use \texttt{--autobox\_add 10} to add an additional 10\AA{} on all 6 sides of the bounding box following \citep{equibind}.

\textbf{EquiBind + SMINA$^{flex}$/GNINA$^{flex}$} The same as EquiBind + SMINA/GNINA but with the flexibility in terms of torsion angles activated for the sidechains that have an atom within 3.5\AA{} of the output from EquiBind.

\textbf{P2Rank + SMINA/GNINA.} The bounding box in which GNINA/SMINA searches for binding poses is constructed around the pocket center that P2Rank predicts as the most likely binding pocket. P2Rank is thus used to find the binding pocket and SMINA/GNINA to find the exact final binding pose. The diameter of the search box is the diameter of a ligand conformer generated by RDKit with an additional 10\AA{} on all 6 sides of the bounding box.

\textbf{P2Rank + SMINA$^{flex}$/GNINA$^{flex}$} The same as P2Rank + SMINA/GNINA but with the flexibility in terms of torsion angles activated for the sidechains that have an atom within 3.5\AA{} + the radius of the ligand away from the pocket center of P2Rank. Additionally we consider at most 10 flexible sidechains with \texttt{--flex\_max 10}.

\textbf{SMINA/GNINA + SMINA$^{flex}$/GNINA$^{flex}$} The same as EquiBind + SMINA/GNINA but the initial ligand is provided by SMINA/GNINA ran without sidechain flexibility. The flexibility is in terms of torsion angles activated for the sidechains that have an atom within 3.5\AA{} of the output from SMINA/GNINA.

\section{\new{Additional Discussion}} \label{appx:discussion}

\subsection{\new{Access to Bound Protein Structure}}\label{appx:apo_holo}

\new{One limitation of \textsc{DiffDock} is that it assumes access to the bound structure of the protein known as holo-structure. Although most of the literature in molecular docking makes this assumption, in practice, one often only has access to the unbound apo protein structure or the holo structure of the protein bound to a different ligand. Tackling the problem of binding to apo structures is challenging due to the limited amount of data of unbound structures that have an atom-to-atom correspondence to the holo structure with which they could be aligned. In \textsc{DiffDock} the limitation of docking to holo structures may be less pronounced than for the search based docking methods that we compare against since \textsc{DiffDock}'s score model only uses the positions of the alpha carbon atoms (and not the side chain atoms). Thus \textsc{DiffDock} would also work well for binding to apo structures when most of the conformational change during binding lies in the side chains and the backbone stays mostly rigid. However, in order to fully model binding to apo structures, one needs to additionally model protein flexibility, we leave the full treatment of this problem to future work. }

\subsection{\new{Torsional vs Euclidean Flexibility}}\label{appx:torsional}
\begin{rebuttal}
A ligand pose (and more generally speaking, a molecule conformation) consists of a position in 3D-space $\mathbb{R}^3$ for each of $n$ atoms of the molecule and can thus be considered as an element of $\mathbb{R}^{3n}$. Each such pose (or conformation), can alternatively be described in terms of its bond lengths, angles, and torsion angles (see \citet{jing2022torsional}, Appendix A for a more formal discussion). Because of the nature of covalent interactions, bond lengths and angles (local structures) are highly energetically constrained, and any specific bond length or angle in a molecule will take on only a very narrow range of values, whether in isolation or bound to a protein receptor. Thus, ligand poses or conformations in which bond lengths or angles are \emph{strained}---i.e., differ significantly from their standard values---can be easily judged to be energetically unfavorable. On the other hand, the energetic profile of varying torsion angles is significantly smoother, as they depend in large part on weaker, noncovalent interactions. A change of chemical environment, or an interaction with another molecule (such as a protein) can alter these profiles. Thus, we say that there is significant \emph{flexibility} in the torsion angles, in contrast to the more rigid local structures.

Since bond lengths and angles are highly constrained, the set of chemically plausible ligand poses is a very small subset of all possible assignments of atoms to $\mathbb{R}^3$; that is, a very restricted subset of $\mathbb{R}^{3n}$. The set of ligand poses that satisfy all the bond length and angle constraints with equality appears as a lower-dimensional \emph{manifold}. To a good approximation, all actual ligand poses lie on this manifold. Thus, we are interested in developing a generative model on this manifold, as any sample outside of it is chemically nonsensical.

Futher, the constrained values of standard bond lengths and angles are easily and quickly predicted using standard cheminformatics routines, such as those in RDKit. Indeed, far simpler rules of thumb---a fixed length for each type of bonding pair, and fixed angles determined solely by molecular geometry---are already quite accurate and widely taught. Consequently, the manifold of plausible poses is very easy to find---one merely has to generate \emph{any} plausible conformation or ligand pose with RDKit. In previous work on conformer generation, \citet{jing2022torsional} verified that the manifolds corresponding to local structures generated by RDKit are, on average, less than 0.5 \AA\ RMSD away from the true conformations.% In contrast, torsional angles are much more difficult to predict and constitute the vast majority of the difficulty and error of conformer generation methods.

The facts that (1) the space of plausible ligand poses is described by a manifold and (2) this manifold is easy to find, motivate the development of a diffusion generative model on the manifold rather than the full ambient space $\mathbb{R}^{3n}$. This significantly reduces the dimensionality of the state space, and thus of the function that the score network needs to approximate. %\emph{Torsional diffusion} \citep{jing2022torsional} developed precisely such a model for conformer generation. The significantly better performance of torsional diffusion over prior Euclidean (i.e., ambient space) diffusion models for the same task \citep{xu2021geodiff} suggested to us to leverage the same paradigm for molecular docking (which can be thought of as conformer generation conditioned on a receptor).
\end{rebuttal}

\section{Additional Results} \label{appx:additional_results}

\subsection{Physically plausible predictions}\label{appx:steric_clashes}

\begin{table}[htb]
    \caption{\textbf{Steric clashes.} Percentage of test complexes for which the predictions of the different methods exhibit steric clashes. Search-based methods never produced steric clashes.}
    \label{tab:steric_clashes}
     \begin{small}
     \begin{center}

    \begin{tabular}{lcc}
    \toprule
      & Top-1  & Top-5 \\
    
        Method & \% steric clashes  & \% steric clashes\\
    \midrule
    \textsc{EquiBind}           & 26  & -   \\
    \textsc{TANKBind}           & 6.6 & 3.6   \\ \midrule
    \textbf{\textsc{DiffDock} (10)}  & \textbf{2.8} & \textbf{0}  \\
    \textbf{\textsc{DiffDock} (40)}  & \textbf{2.2} & \textbf{2.2}  \\
    \bottomrule
    \end{tabular}
    \end{center}
    \end{small}
\end{table}

Due to the averaging phenomenon of regression-based methods such as TANKBind and EquiBind, they make predictions at the mean of the distribution. If aleatoric uncertainty is present, such as in case of symmetric complexes, this leads to predicting the ligand to be at an un-physical state in the middle of the possible binding pockets as visualized in Figure~\ref{fig:symmetric_complexes}. The Figure also illustrates how \textsc{DiffDock} does not suffer from this issue and is able to accurately sample from the modes.

In the scenario when epistemic uncertainty about the correct ligand conformation is present, this often results in ``squashed-up" predictions of the regression-based methods as visualized in Figure~\ref{fig:self_intersections}. If there is uncertainty about the correct conformer, the square error minimizing option is to put all atoms close to the mean.

These averaging phenomena in the presence of either aleatoric or epistemic uncertainty cause the regression-based methods to often generate steric clashes and self intersections. To investigate this quantitatively, we determine the fraction of test complexes for which the methods exhibit steric clashes. We define a ligand as exhibiting a steric clash if one of its heavy atoms is within 0.4\AA{} of a heavy receptor atom. This cutoff is used by protein quality assessment tools and in previous literature \citep{Ramachandran2011stericClashes}. Table~\ref{tab:steric_clashes} shows that \textsc{DiffDock}, as a generative model, produces fewer steric clashes than the regression-based baselines. We generally observe no unphysical predictions from \textsc{DiffDock} unlike the self intersections that, e.g., TANKBind produces (Figure~\ref{fig:self_intersections}) or its incorrect local structures (Figure~\ref{fig:plausible_local_structures}). This is also visible in the randomly chosen examples of Figure~\ref{fig:random_examples} and can be examined in our repository, where we provide all predictions of \textsc{DiffDock} for the test set.

\begin{figure}[t]
    \centering
    \includegraphics[width=\textwidth]{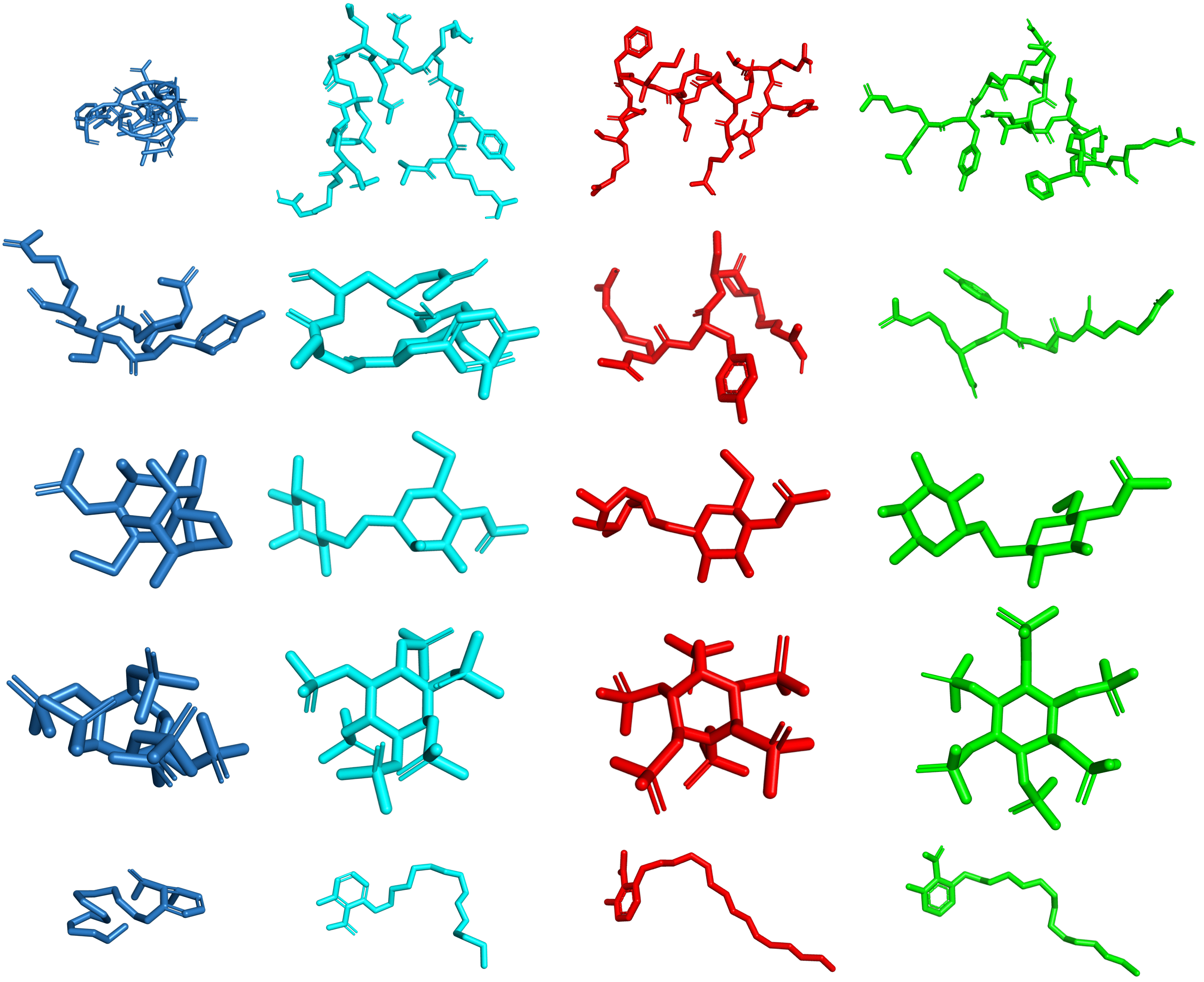}
    \caption{\textbf{Ligand self-intersections.} TANKBind (blue), EquiBind (cyan), \textsc{DiffDock} (red), and crystal structure (green). Due to the averaging phenomenon that occurs when epistemic uncertainty is present, the regression-based deep learning models tend to produce ligands with atoms that are close together, leading to self-intersections. \textsc{DiffDock}, as a generative model, does not suffer from this averaging phenomenon, and we never found a self-intersection in any of the investigated results of \textsc{DiffDock}.}
    \label{fig:self_intersections}
\end{figure}

\begin{figure}[t]
    \centering
    \includegraphics[width=\textwidth]{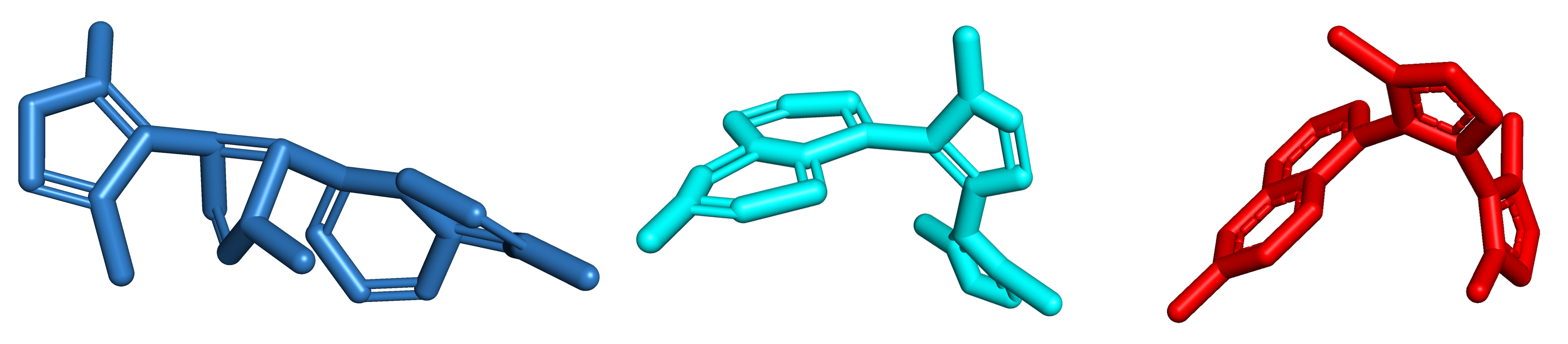}
    \caption{\textbf{Chemically plausible local structures.} TANKBind (blue), EquiBind (cyan), and \textsc{DiffDock} (red) structures for complex 6g2f. EquiBind (without their correction step) produces very unrealistic local structures and TANKBind, e.g., produces non-planar aromatic rings. \textsc{DiffDock}'s local structures are the realistic local structures of RDKit. }
    \label{fig:plausible_local_structures}
\end{figure}

\subsection{Further Results and Metrics}

In this section, we present further evaluation metrics on the results presented in Table~\ref{tab:results_main}. In particular, for both top-1 (Table~\ref{tab:top1complete}) and top-5 (Table~\ref{tab:top5complete}) we report: 25th, 50th and 75th percentiles, the proportion below 2\AA{} and below 5\AA{} of both ligand RMSD and centroid distance. Moreover, while \citet{Volkov2022PDBBindSplits} advocated against artificial protein set splits and for time-based splits, for completeness, in Table~\ref{tab:results_unseen} and Figure~\ref{fig:histogram_unseen_rec}, we report the performances of the different methods when evaluated exclusively on the portion of the test set where the UniProt IDs of the proteins are not contained in the data that is seen by \textsc{DiffDock} in its training and validation. 

To assess the impact of the molecule size on the performance of \textsc{DiffDock} and GNINA, we provide scatter plots in Figure \ref{fig:size_vs_rmsd} showing that the correlation between RMSD and the number of rotatable bonds or the number of atoms in the molecule is similar for both methods. In Figure \ref{fig:tanimoto_similarity} we show that, based on Tanimoto similarity, the performance of \textsc{DiffDock} does not depend on the test ligand's similarity to the ligands that have already been seen during training. The Spearman rank correlation coefficient between the RMSD and the Tanimoto similarity to the closest ligand in the training set is negligible with it being -0.031.

Further, in Table \ref{tab:sidechain_flex_baselines}, we provide the performance of SMINA and GNINA in the apo-structure docking setting when the flexibility of sidechains in the pocket is turned on. These show that simply adding sidechain flexibility does not help these methods. Finally, in Figure \ref{fig:esm_by_rmsd}, we plot the relationship between the quality of the ESMFold structure and the performance of the docking methods on apo and holo-structures. While \textsc{DiffDock} retains a large part of its accuracy when the protein backbone is approximately correct, very small variations in the apo-structure backbone (and sidechains) causes search-based methods like GNINA to almost never find the right pose.

\vspace{-6pt}
\begin{figure}[htb]
\begin{center}
\includegraphics[width=.48\textwidth]{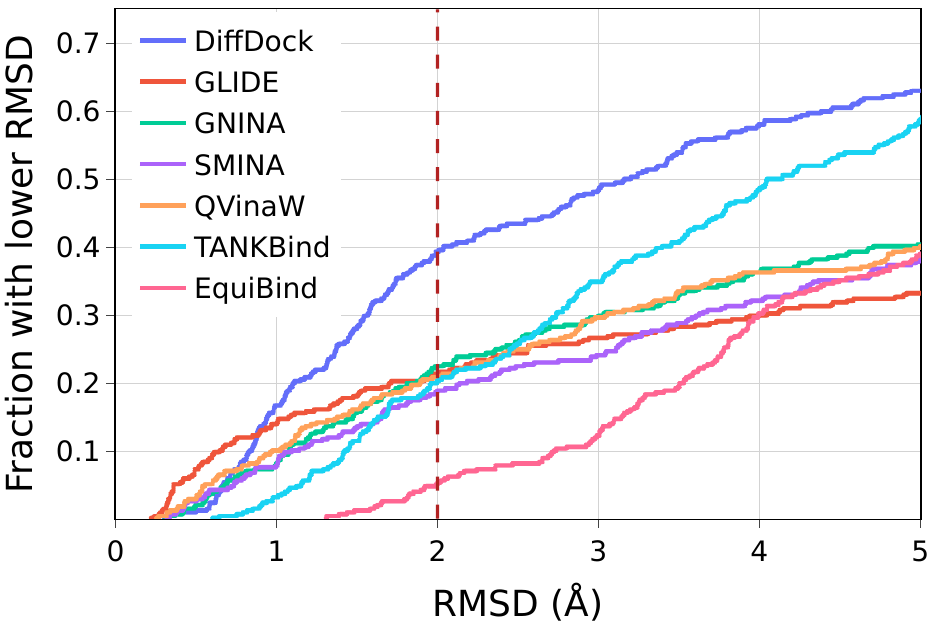}
\hspace{5pt}
\includegraphics[width=.48\textwidth]{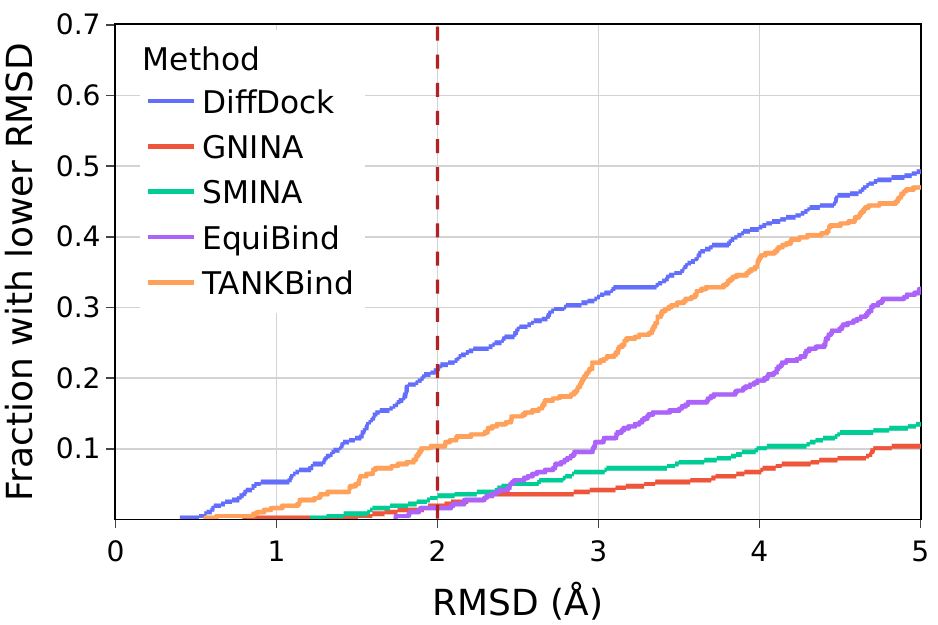}
\vspace{-6pt}
\caption{Cumulative density histogram of the methods' RMSD: left on holo crystal structures, right on apo ESMFold structures. } 
\label{fig:cum_densities}
\end{center}
 \vskip -0.1cm
\end{figure}

\begin{table*}[htb]
    % \tiny
    % \scriptsize
    % \footnotesize
    % \small
    \caption{\textbf{Top-1 PDBBind docking.}  }
    \label{tab:top1complete}
    % \hspace{-0.4cm}
     \begin{small}
     \begin{center}
     \makebox[\textwidth][c]{
    \begin{tabular}{=l+c+c+c+c+c|+c+c+c+c+c}
    
    \toprule
     &\multicolumn{5}{c}{Ligand RMSD} & \multicolumn{5}{c}{Centroid Distance} \\
    %  \cmidrule{2-9}
     &\multicolumn{3}{c}{Percentiles $\downarrow$} & \multicolumn{2}{c}{\begin{tabular}{@{}c@{}}\% below\\threshold $\uparrow$\end{tabular} }  &\multicolumn{3}{c}{Percentiles $\downarrow$} & \multicolumn{2}{c}{\begin{tabular}{@{}c@{}}\% below\\thresh. $\uparrow$\end{tabular} } \\
    % \cmidrule{3-16} 
    
    \textbf{Methods} & 25th & 50th & 75th & 5 \AA{}  &  2 \AA{} & 25th & 50th & 75th & 5 \AA{}  &  2 \AA{} \\
    \midrule
    \textsc{Autodock Vina} & 5.7  &  10.7 &  21.4 & 21.2 & 5.5 & 1.9 &  6.2 & 20.1 & 47.1 & 26.5 \\
     \textsc{QVina-W} & 2.5  &  7.7 &  23.7 & 40.2 & 20.9 & 0.9 &  3.7 & 22.9 & 54.6 & 41.0 \\
     \textsc{GNINA} & 2.4 & 7.7   & 17.9 & 40.8  & 22.9 & 0.8 &  3.7 & 23.1 & 53.6 & 40.2 \\
    \textsc{SMINA} & 3.1 & 7.1 & 17.9 & 38.0 & 18.7 & 1.0 &  2.6 & 16.1 & 59.8 & 41.6 \\
    \textsc{GLIDE} (c.) & 2.6 & 9.3   & 28.1 & 33.6 & 21.8 & 0.8 &  5.6 & 26.9 & 48.7 & 36.1 \\
    \textsc{EquiBind} & 3.8 & 6.2 &  10.3 & 39.1 & 5.5 &  1.3 & 2.6 & 7.4 & 67.5&  40.0 \\ \midrule
    \textsc{TANKBind} & 2.5 & 4.0 & 8.5 & 59.0 & 20.4 & 0.9 & 1.8 & 4.4 & 77.1 &  55.1  \\ 
    \textsc{P2Rank+SMINA} & 2.9 & 6.9 & 16.0 & 43.0 & 20.4 & 0.8 & 2.6 & 14.8 & 60.1 & 44.1   \\ 
    \textsc{P2Rank+GNINA} & 1.7 & 5.5 & 15.9 & 47.8 & 28.8 & 0.6 & 2.2 & 14.6 & 60.9 & 48.3   \\ 
    \textsc{EquiBind+SMINA} & 2.4 & 6.5 & 11.2 & 43.6 & 23.2 & 0.7 & 2.1 & 7.3 & 69.3 & 49.2  \\ 
    \textsc{EquiBind+GNINA} & 1.8 & 4.9 & 13 & 50.3 & 28.8 & 0.6 & 1.9 & 9.9   & 66.5 & 50.8  \\ \midrule 
    \textbf{\textsc{DiffDock} (10)} & 1.5 & 3.6 & \textbf{7.1} & 61.7 & 35.0 & \textbf{0.5} & \textbf{1.2} & 3.3 & \textbf{80.7} & 63.1   \\ 
    \textbf{\textsc{DiffDock} (40)} & \textbf{1.4} & \textbf{3.3} & 7.3 & \textbf{63.2} & \textbf{38.2} & \textbf{0.5} & \textbf{1.2} & \textbf{3.2} & 80.5 & \textbf{64.5}   \\ 
     \bottomrule
    \end{tabular}}
    \end{center}
    \end{small}
\end{table*}

\begin{table*}[htb]
    % \tiny
    % \scriptsize
    % \footnotesize
    % \small
    \caption{\textbf{Top-5 PDBBind docking.}  }
    \label{tab:top5complete}
    % \hspace{-0.4cm}
     \begin{small}
     \begin{center}
     \makebox[\textwidth][c]{
    \begin{tabular}{=l+c+c+c+c+c|+c+c+c+c+c}
    
    \toprule
     &\multicolumn{5}{c}{Ligand RMSD} & \multicolumn{5}{c}{Centroid Distance} \\
    %  \cmidrule{2-9}
     &\multicolumn{3}{c}{Percentiles $\downarrow$} & \multicolumn{2}{c}{\begin{tabular}{@{}c@{}}\% below\\threshold $\uparrow$\end{tabular} }  &\multicolumn{3}{c}{Percentiles $\downarrow$} & \multicolumn{2}{c}{\begin{tabular}{@{}c@{}}\% below\\thresh. $\uparrow$\end{tabular} } \\
    % \cmidrule{3-16} 
    
    \textbf{Methods} & 25th & 50th & 75th & 5 \AA{}  &  2 \AA{} & 25th & 50th & 75th & 5 \AA{}  &  2 \AA{} \\
    \midrule
    \textsc{GNINA} & 1.6 & 4.5 & 11.8 & 52.8 & 29.3 & 0.6 & 2.0 & 8.2 & 66.8 & 49.7   \\
    \textsc{SMINA} & 1.7 & 4.6 & 9.7 & 53.1 & 29.3 & 0.6 & 1.85 & 6.2 & 72.9 & 50.8  \\ \midrule
    \textsc{TANKBind} & 2.1 & 3.4 & 6.1 & 67.5 & 24.5 & 0.8 & 1.4 & 2.9 & 86.8 & 62.0  \\ 
    \textsc{P2Rank+SMINA} & 1.5 & 4.4 & 14.1 & 54.8 & 33.2 & 0.6 & 1.8 & 12.3 & 66.2 & 53.4   \\ 
    \textsc{P2Rank+GNINA} & 1.4 & 3.4 & 12.5 & 60.3 & 38.3 & 0.5 & 1.4 & 9.2 & 69.3 & 57.3  \\ 
    \textsc{EquiBind+SMINA} & 1.3 & 3.4 & 8.1 & 60.6 & 38.6 & 0.5 & 1.3 & 5.1 & 74.9 & 58.9   \\ 
    \textsc{EquiBind+GNINA} & 1.4 & 3.1 & 9.1 & 61.7 & 39.1 & 0.5 & 1.1 & 5.3 & 73.7 & 60.1  \\ \midrule 
    \textbf{\textsc{DiffDock} (10)} & \textbf{1.2} & 2.7 & \textbf{4.9} & 75.1 & 40.7 & 0.5 & 1.0 & 2.2 & 87.0 & 72.3   \\ 
    \textbf{\textsc{DiffDock} (40)} & \textbf{1.2} & \textbf{2.4} & 5.0 & \textbf{75.5} & \textbf{44.7} & \textbf{0.4} & \textbf{0.9} & \textbf{1.9} & \textbf{88.0} & \textbf{76.7}    \\ 
     \bottomrule
    \end{tabular}}
    \end{center}
    \end{small}
\end{table*}

\begin{figure}[htb]
\begin{center}
\includegraphics[width=.48\textwidth]{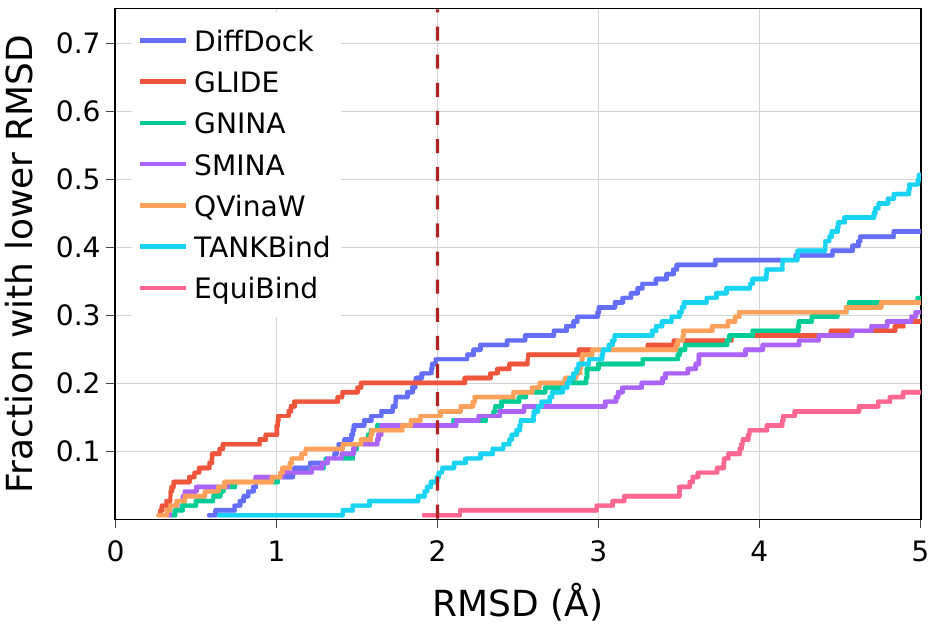}
\hspace{5pt}
\includegraphics[width=.48\textwidth]{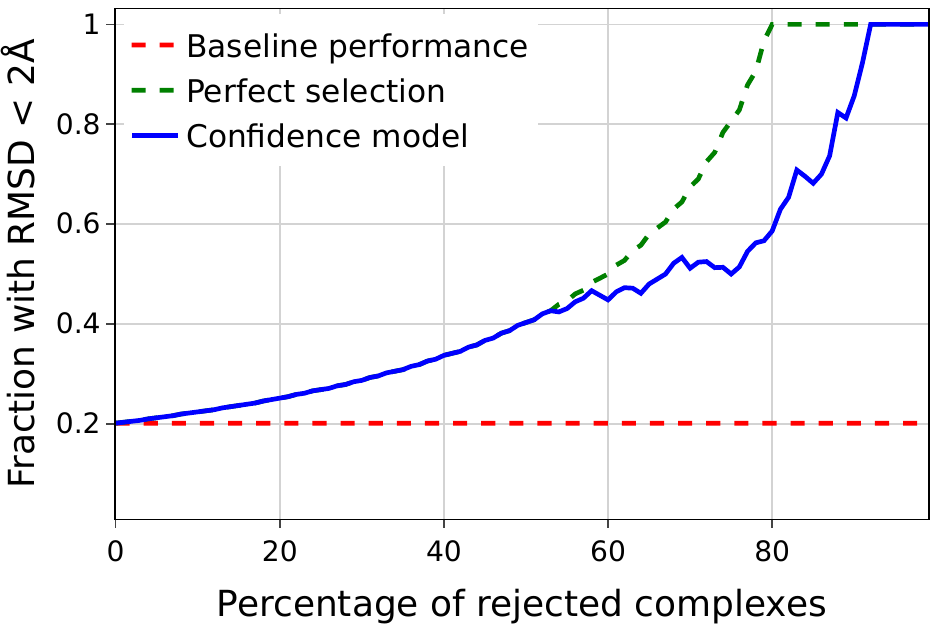}
\caption{\new{\textbf{PDBBind docking on unseen receptors.} \textbf{Left:} cumulative density histogram of the methods' RMSD. \textbf{Right:} Percentage of predictions with RMSD below 2\AA{} when only making predictions for the portion of the dataset where \textsc{DiffDock} is most confident. }} 
\label{fig:histogram_unseen_rec}
\end{center}
 \vskip -0.1cm
\end{figure}

\begin{table}[htb]
    \caption{\textbf{PDBBind docking on unseen receptors.} Percentage of predictions for which the RMSD to the crystal structure is below 2\AA{} and the median RMSD. ``*" indicates the method run exclusively on CPU, ``-" means not applicable; some cells are empty due to infrastructure constraints. }
    \label{tab:results_unseen}
     \begin{small}
     \begin{center}

    \begin{tabular}{l=c+c|+c+c|c}
    \toprule
      & \multicolumn{2}{c}{Top-1 RMSD} & \multicolumn{2}{c}{Top-5 RMSD}  &  Average\\
    
        Method & \,\%$<$2\, & \,Med.\, & \,\%$<$2\, & \,Med.\, & \,Runtime (s)\, \\
    \midrule
     \textsc{Autodock Vina}             &  1.4 & 16.6 &      &      &   205*     \\
    \textsc{QVinaW}             &  15.3 & 10.3 &      &      &  49*     \\
    \textsc{GNINA}              &  14.0 & 13.6 & 23.0 & 7.0  &  127    \\
    \textsc{SMINA}              &  14.0 & 8.5  & 21.7 & 6.7  &  126*    \\
    \textsc{GLIDE}              &  19.6 & 18.0 &      &      &  1405*   \\
    \textsc{EquiBind}           &  0.7  & 9.1  &  -   &  -   &  \textbf{0.04}   \\ 
    \textsc{TANKBind}           &  6.3  & \textbf{5.0}  & 11.1 & 4.4  &  0.7/2.5  \\ \midrule
    \textbf{\textsc{DiffDock} (10)}  &  15.7 & 6.1 & 21.8 & 4.2 & 10  \\   
    \textbf{\textsc{DiffDock} (40)}  &  \textbf{20.8} & 6.2  & \textbf{28.7} & \textbf{3.9}  & 40   \\
    \bottomrule
    \end{tabular}
    \end{center}
    \end{small}
\end{table}

\begin{figure}[htb]
\begin{center}
\includegraphics[width=.48\textwidth]{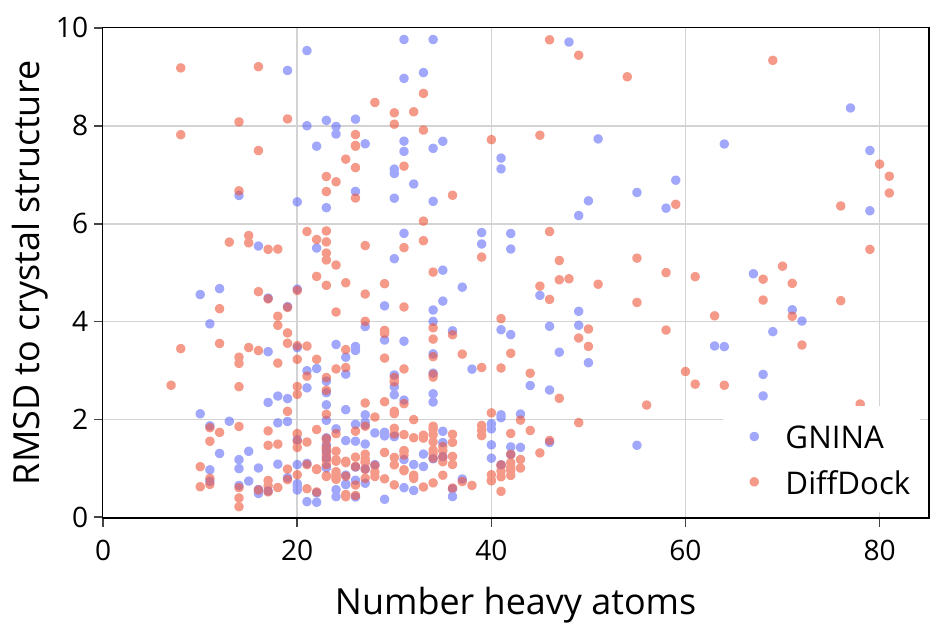}
\includegraphics[width=.48\textwidth]{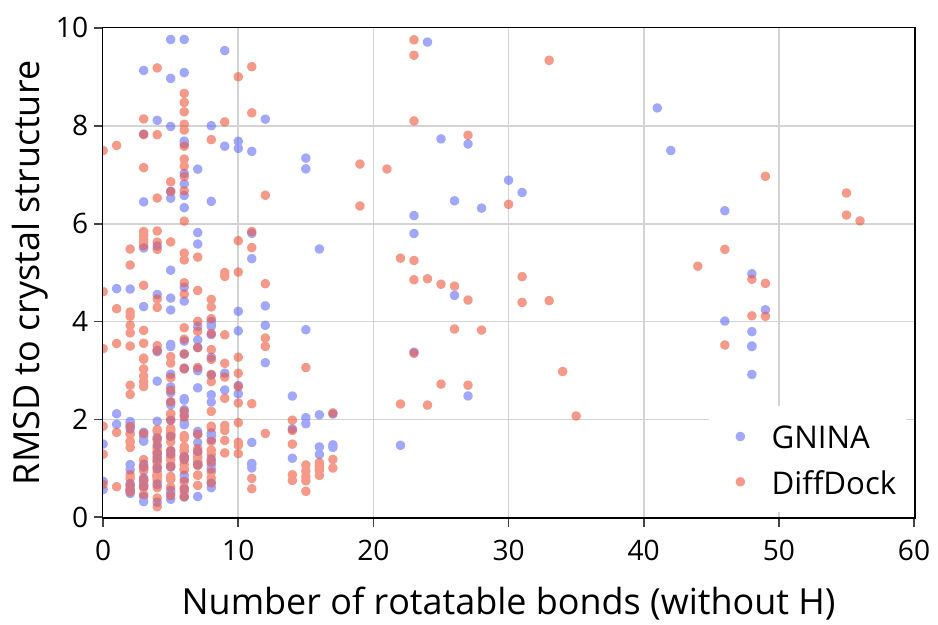}
\caption{\new{\textbf{Relationship between the number of atoms and RMSD.} \textbf{Left:} Scatter plot of the RMSD and the number of atoms. \textbf{Right:} Scatter plot of the RMSD and the number of rotatable bonds in the ligand. }} 
\label{fig:size_vs_rmsd}
\end{center}
 \vskip -0.1cm
\end{figure}

\begin{figure}[t]
    \centering
    \includegraphics[width=0.65\textwidth]{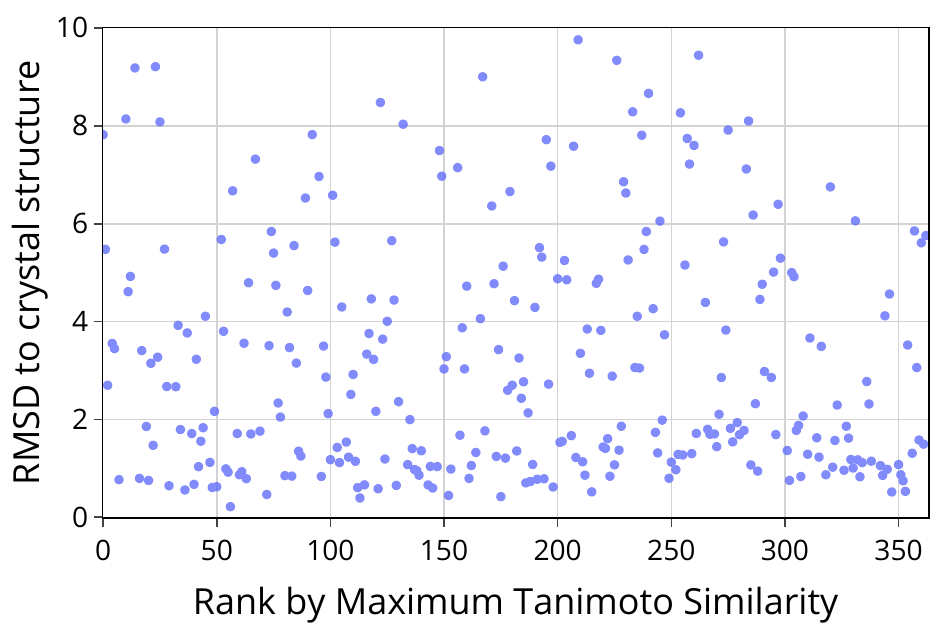}
    \caption{Scatterplot for the RMSD and the Tanimoto similarity of the test ligand to the closest ligand in the training data. The order by maximum Tanimoto similarity is such that the lowest similarity is at the left and the highest similarity is at the right on the x-axis.}
    \label{fig:tanimoto_similarity}
\end{figure}

\begin{table}[htb]
    \caption{\textbf{PDBBind blind docking on apo proteins with sidechain flexibility in baselines.} All methods receive a small molecule and are tasked to find its binding location, orientation, and conformation. Shown is the percentage of predictions with RMSD $<$ 2\AA{} and the median RMSD (see Appendix \ref{appx:hyperparameters}). In parenthesis, we specify the number of poses sampled from the generative model. * indicates that the method runs exclusively on CPU. No method received further training or tuning on ESMFold structures.}
    \label{tab:sidechain_flex_baselines}
     \begin{small}
     \begin{center}

    \begin{tabular}{lcc|cc|c}
    \toprule
      & \multicolumn{4}{c}{Apo ESMFold proteins} & \\ \rule{0pt}{2ex}
      
       & \multicolumn{2}{c}{Top-1 RMSD} & \multicolumn{2}{c}{Top-5 RMSD} & Average\\ \rule{0pt}{2ex}  
    
        Method  & \%$<$2 & Med. &  \%$<$2 & Med.  &  Runtime (s)  \\
    \midrule
    \textsc{P2Rank+SMINA$^{flex}$}        & 5.7 & 9.6  & 13.4 & 6.3 & 292* \\ 
    \textsc{P2Rank+GNINA$^{flex}$}        & 8.3 & 10.9  & 13.4 &  6.4 & 294 \\ 
    \textsc{EquiBind+SMINA$^{flex}$}      & 4.3 & 7.3  & 11.7 & 5.8 & 1145* \\
    \textsc{EquiBind+GNINA$^{flex}$}      & 6.6 & 9.8  & 14.6 & 6.1 & 1208   \\ 
    \textsc{SMINA+SMINA$^{flex}$}      & 3.4 & 12.6  & 8.3 & 11.6 & 1145*  \\
    \textsc{GNINA+GNINA$^{flex}$}      & 1.7 & 22.1  & 5.1 & 20.0 & 1208    \\ \midrule
    \textbf{\textsc{DiffDock} (10)}    & \textbf{21.7} & \textbf{5.0}  & \textbf{31.9} & \textbf{3.3} &  10     \\
    \textbf{\textsc{DiffDock} (40)}    & 20.3 & 5.1  & 31.3 & \textbf{3.3} & 40      \\
    \bottomrule
    \end{tabular}
    \end{center}
    \end{small}
\vspace{-12pt}
\end{table}

\clearpage

\begin{figure}[htb]
\begin{center}
\includegraphics[width=.65\textwidth]{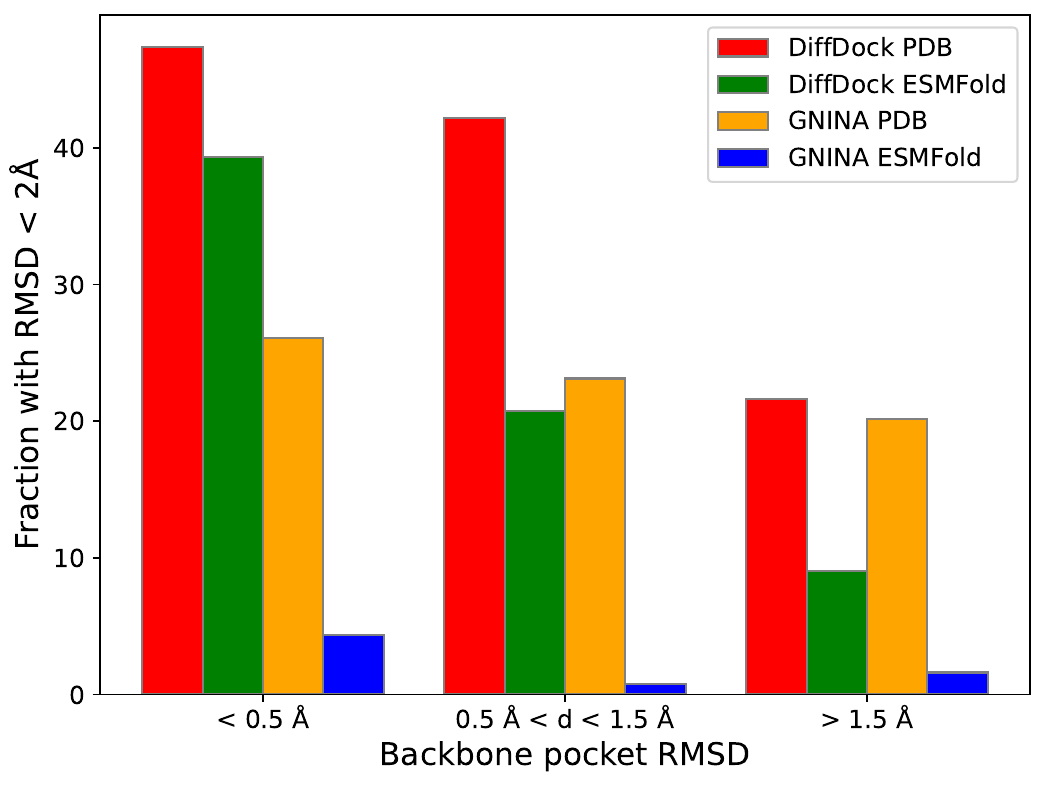}
\caption{\textbf{Relationship between the ESMFold and docking accuracy.} The test set complexes on PDBBind are divided into three similar-sized groups based on the RMSD of the pocket residues (defined as < 10 \AA{} to a ligand atom) in the aligned ESMFold generated structure. The performance of \textsc{DiffDock} and GNINA are shown when docking to both the ground truth crystal structure and the ESMFold structure. The baseline methods' performance on crystal structures is also negatively correlated with the ESMFold accuracy because the complexes where the methods do badly tend to be larger and with fewer examples in PDB(Bind).  }
\label{fig:esm_by_rmsd}
\end{center}
 \vskip -0.1cm
\end{figure}

\subsection{Ablation studies} \label{app:ablations}

Below we report the performance of our method over different hyperparameter settings. In particular, we highlight the different ways in which it is possible to control the tradeoff between runtime and accuracy in our method. These mainly are: (1) model size, (2) diffusion time, and (3) diffusion samples.

\textbf{Model size.}  The final \textsc{DiffDock} score model has 20.24 million parameters from its 6 convolution layers with 48 scalar and 10 vector features. In Table~\ref{tab:model_size} we show the results for a smaller score model with 5 convolutions, 24 scalar, and 6 vector features resulting in 3.97 million parameters that can be trained on a single 48GB GPU. The confidence model used is the same for both score models. We find that scaling up the model size helped improve performance which we did as far as possible using four 48GB GPUs for training. Scaling the model size further is a promising avenue for future work.

\new{\textbf{Protein embeddings.}  As described in Section \ref{sec:architecture} and Appendix \ref{app:architecture}, the architecture uses as initial features of protein residues the language model embeddings from ESM2 \citep{Lin2022ESM2} in order for the model to more easily reason about the protein sequence. In Table~\ref{tab:model_size} we show that while these provide some improvements they are not necessary to obtain state-of-the-art performance.}

\begin{table}[htb]
    \caption{\textbf{Model size \new{and protein embeddings} comparison.} All methods receive a small molecule and are tasked to find its binding location, orientation, and conformation. Shown is the percentage of predictions for which the RMSD to the crystal structure is below 2\AA{} and the median RMSD.}
    \label{tab:model_size}
     \begin{small}
     \begin{center}

    \begin{tabular}{l=c+c|+c+c|+c}
    \toprule
      & \multicolumn{2}{c}{Top-1 RMSD (\AA{})} & \multicolumn{2}{c}{Top-5 RMSD (\AA{})} & Average\\ \rule{0pt}{1.5ex}  
    
        Method & \,\%$<$2\, & \,Med.\, & \,\%$<$2\, & \,Med.\, & Runtime (s)  \\
    \midrule
     \textbf{\textsc{DiffDock-small-noESM} (10)}  &  26.2 & 4.7  & 32.0 & 3.2  &  7      \\
     \textbf{\textsc{DiffDock-small-noESM} (40)}  &   28.4 & 3.8  & 37.7 & 2.6  & 28      \\ \midrule
    \textbf{\textsc{DiffDock-small} (10)}  &  26.0 & 4.3  & 33.3 & 3.2  &  7      \\
    \textbf{\textsc{DiffDock-small} (40)}  &  31.1 & 4.0  & 38.0 & 2.7  & 28      \\
    \midrule
     \textbf{\textsc{DiffDock-noESM} (10)}  &  33.9 & 3.8  & 39.4 & 2.8  & 10      \\
     \textbf{\textsc{DiffDock-noESM} (40)}  &  34.2 & 3.5  & 42.7 & 2.4  & 40      \\ \midrule
    \textbf{\textsc{DiffDock} (10)}  &  35.0 & 3.6  & 40.7 & 2.7  &  10     \\
    \textbf{\textsc{DiffDock} (40)}  &  \textbf{38.2} & \textbf{3.3}  & \textbf{44.7} & \textbf{2.4}  & 40      \\
    \bottomrule
    \end{tabular}
    \end{center}
    \end{small}
\vspace{-0.2cm}
\end{table}

\textbf{Diffusion steps. } Another hyperparameter determining the runtime of the method during inference is the number of steps we take during the reverse diffusion. Since these are applied sequentially \textsc{DiffDock}'s runtime scales approximately linearly with the number of diffusion steps. In the rest of the paper, we always use 20 steps, but in Figure~\ref{fig:diffusion_steps} we show how the performance of the model varies with the number of steps. We note that the model reaches nearly the full performance even with just 10 steps, suggesting that the model can be sped up 2x with a small drop in accuracy.

\begin{figure}[t]
    \centering
    \includegraphics[width=0.65\textwidth]{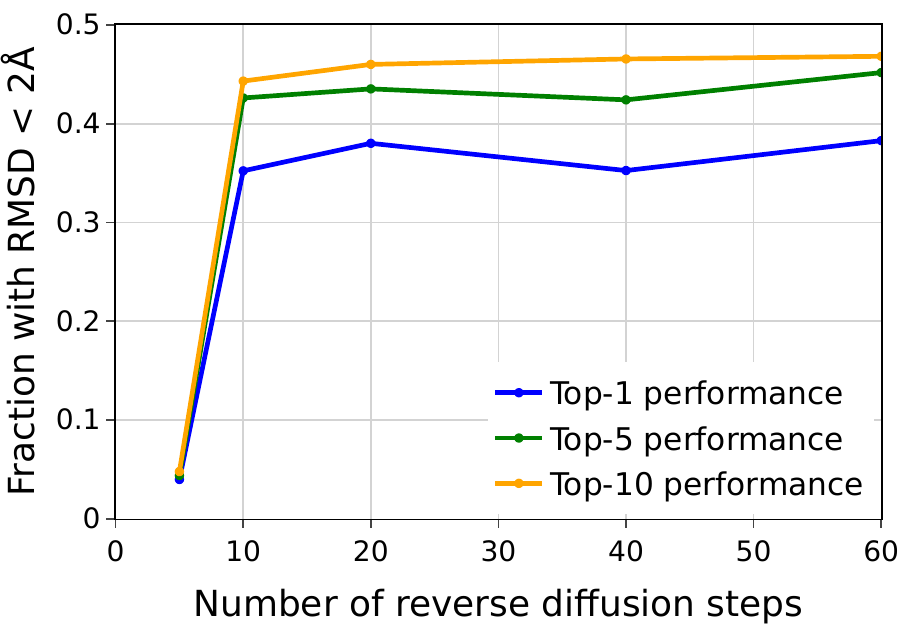}
    \caption{ Ablation study on the number of reverse diffusion steps. }
    \label{fig:diffusion_steps}
\end{figure}

\textbf{Diffusion samples. } Given a score-based model and a number of steps for the diffusion model, it remains to be determined how many independent samples $N$ to query from the diffusion model and then feed to the confidence model. As expected the more samples the confidence model receives the more likely it is that it will find a pose that it is confident about and, therefore, the higher the performance. The runtime of \textsc{DiffDock} on GPU scales sublinearly until the different samples fit in parallel in the model (depends on the protein size and the GPU memory) and approximately linearly for larger sample sizes (however it can be easily parallelized across different GPUs). In Figure~\ref{fig:results_main} we show how the success rate for the top-1, top-5, and top-10 prediction change as a function of $N$. For example, for the top-1 prediction, the proportion of the prediction with RMSD below 2\AA{} varies between 22\% of a random sample of the diffusion model ($N=1$) to 38\% when the confidence model is allowed to choose between 40 samples.

\clearpage
\subsection{Visualizations}

\begin{figure}[htb]
    \centering
    \includegraphics[width=\textwidth]{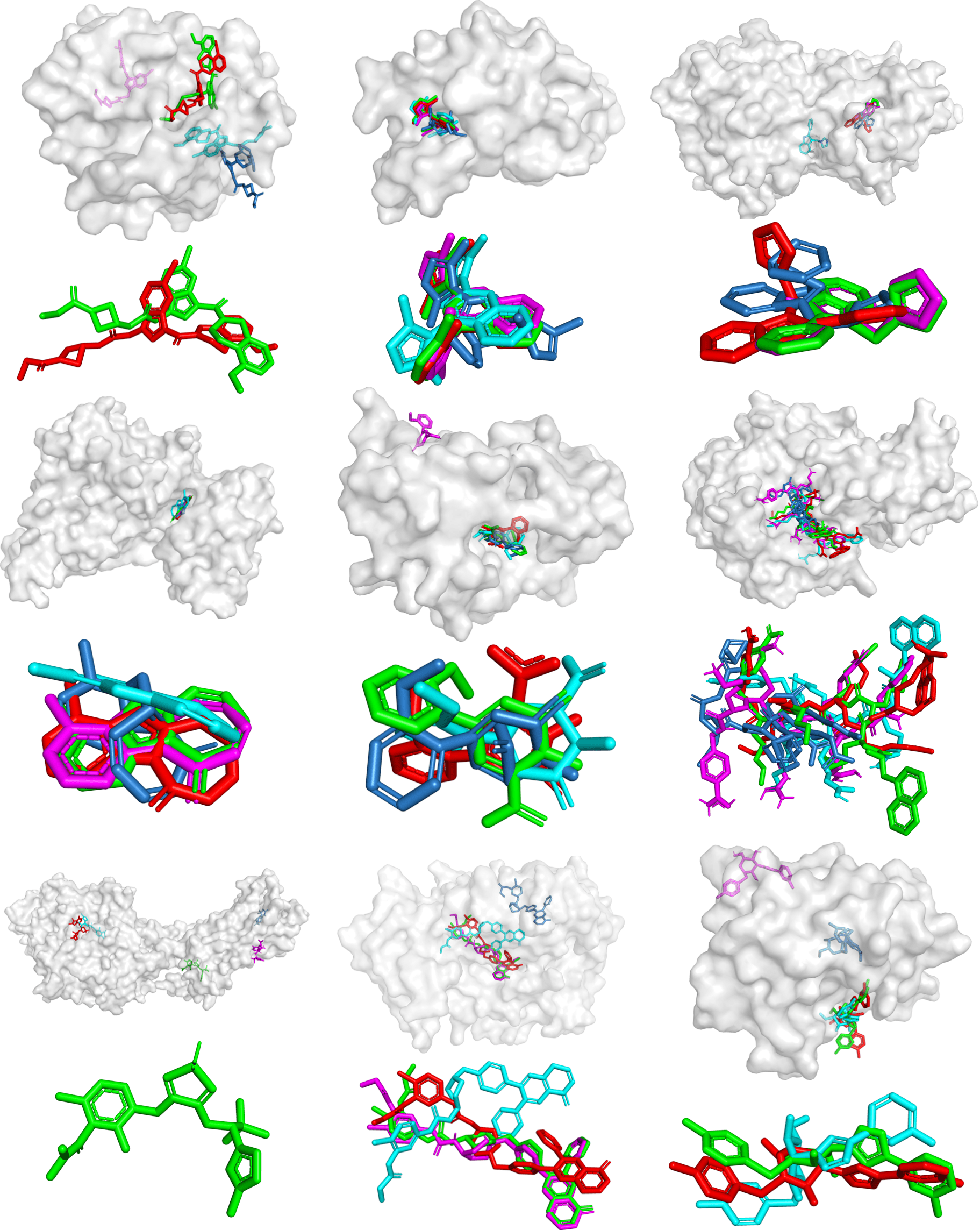}
    \caption{\textbf{Randomly picked examples.} The predictions of TANKBind (blue), EquiBind (cyan), GNINA (magenta), \textsc{DiffDock} (red), and crystal structure (green). Shown are the predictions once with the protein and without it below. The complexes were chosen with a random number generator from the test set. TANKBind often produces self intersections (examples at the top-right; middle-middle; middle-right; bottom-right). \textsc{DiffDock} and GNINA sometimes almost perfectly predict the bound structure (e.g., top-middle). The complexes in reading order are: 6p8y, 6mo8, 6pya, 6t6a, 6e30, 6hld, 6qzh, 6hhg, 6qln.}
    \label{fig:random_examples}
\end{figure}

\begin{figure}[ht]
    \centering
    \includegraphics[width=\textwidth]{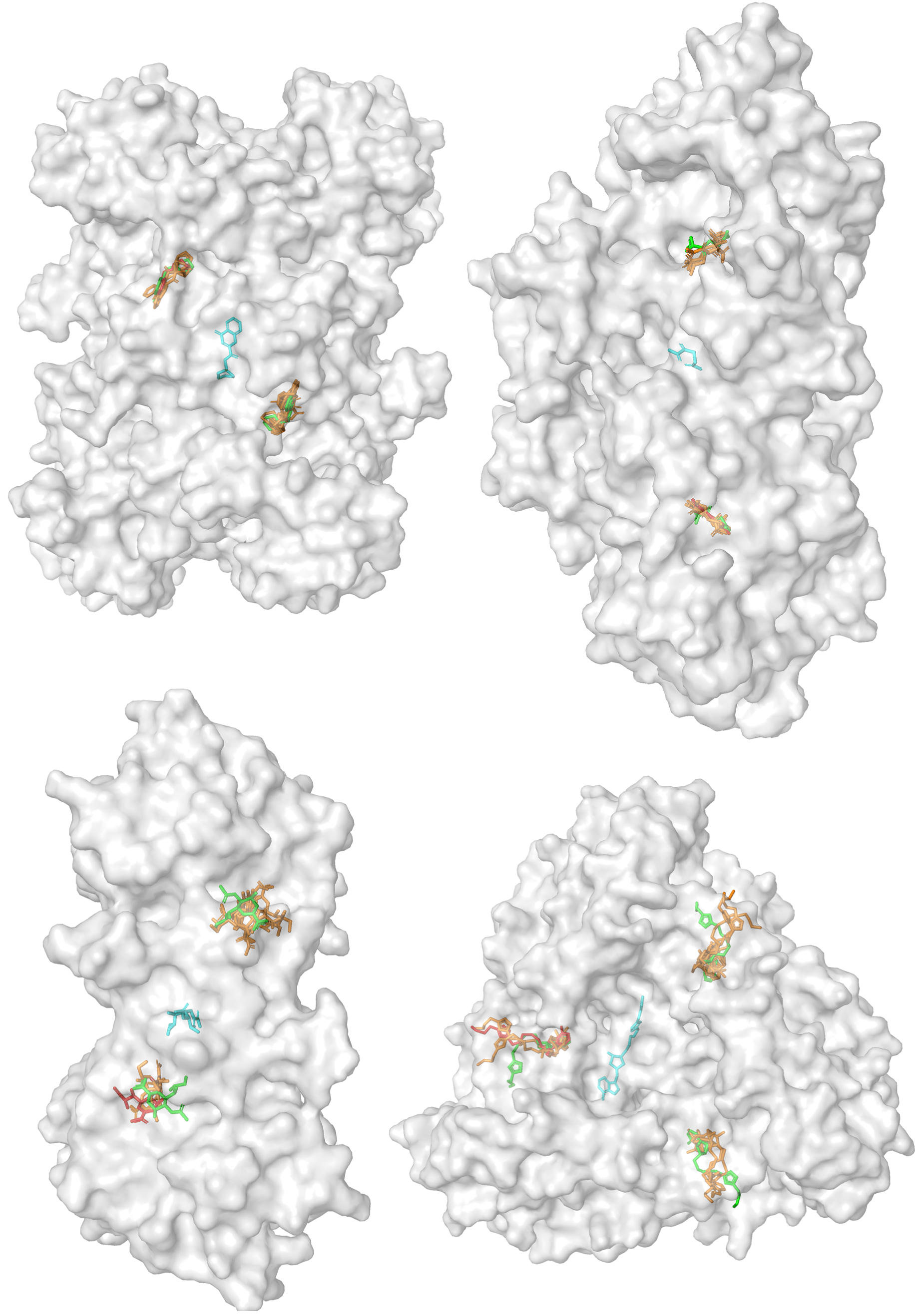}
    \caption{\textbf{Symmetric complexes and multiple modes.} EquiBind (cyan), \textsc{DiffDock} highest confidence sample (red), all other \textsc{DiffDock} samples (orange), and the crystal structure (green). We see that, since it is a generative model, \textsc{DiffDock} is able to produce multiple correct modes and to sample around them. Meanwhile, as a regression-based model, EquiBind is only able to predict a structure at the mean of the modes. The complexes are unseen during training. The PDB IDs in reading order: 6agt, 6gdy, 6ckl, 6dz3.}
    \label{fig:symmetric_complexes}
\end{figure}

\begin{figure}[ht]
    \centering
    \includegraphics[width=\textwidth]{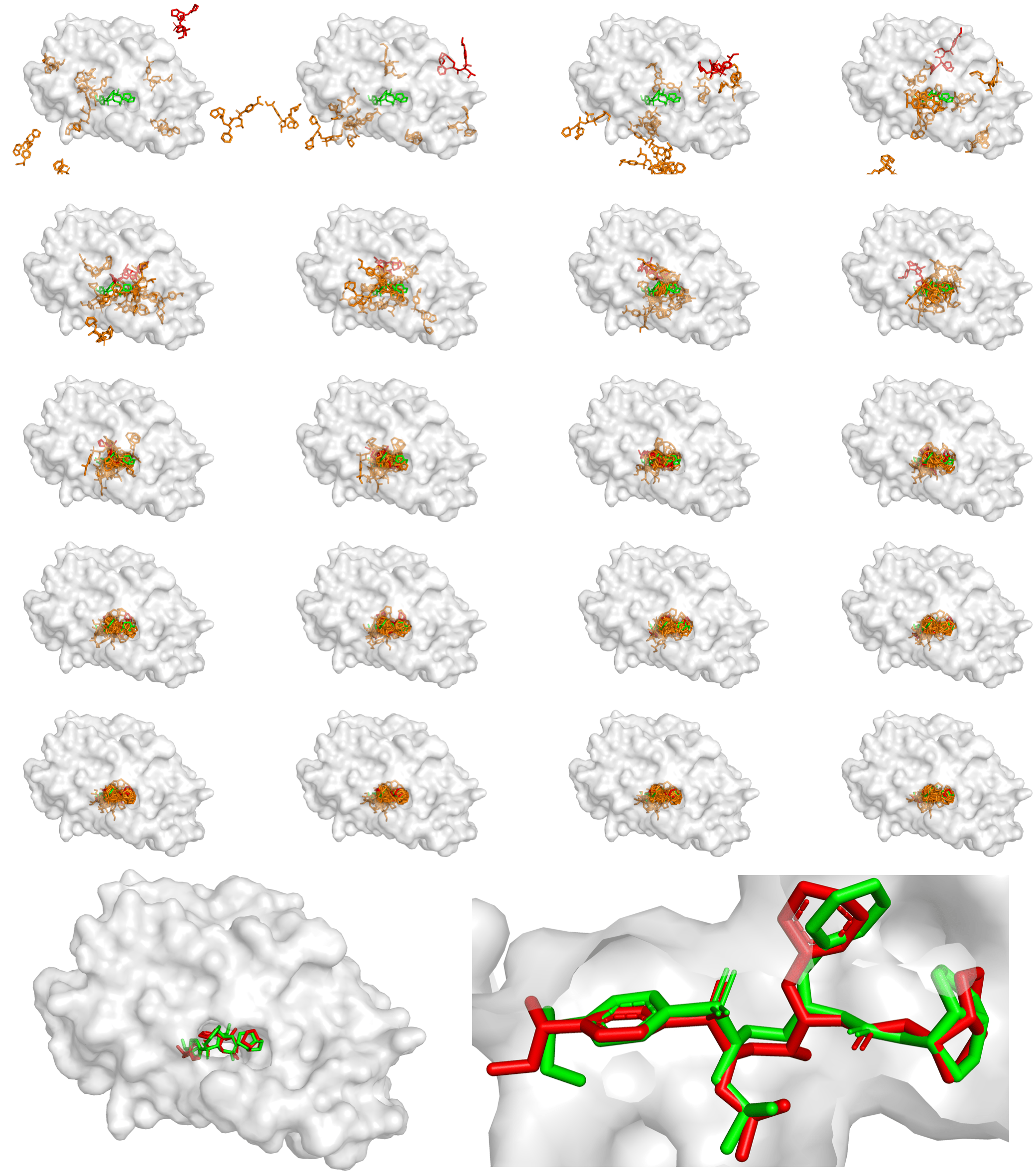}
    \caption{\textbf{Reverse Diffusion.} Reverse diffusion of a randomly picked complex from the test set. Shown are \textsc{DiffDock} highest confidence sample (red), all other \textsc{DiffDock} samples (orange), and the crystal structure (green). Shown are the 20 steps of the reverse diffusion process (in reading order) of \textsc{DiffDock} for the complex 6oxx. Videos of the reverse diffusion are available at \url{https://github.com/gcorso/DiffDock/visualizations/README.md}.}
    \label{fig:reverse_diffusion2}
\end{figure}

%\begin{figure}[t]
%    \centering
%    \includegraphics[width=\textwidth]{}
%    \caption{\textbf{Reverse Diffusion.} \textsc{DiffDock} highest confidence sample (red), all other \textsc{DiffDock} samples (orange), and the cystal structure (green). Shown are the 20 steps of reverse diffusion process (in reading order) of \textsc{DiffDock} for the complex 6dz3 which was not seen during training. Videos of the reverse diffusion are available at \url{https://anonymous.4open.science/r/DiffDock/visualizations/README.md}.}
%    \label{fig:reverse_diffusion}
%\end{figure}

\end{document}